%%%%%%

\documentclass[final,12pt]{elsarticle}

%THIS IS FOR REMOVING THE LINE THAT SAYS SUBMITTED TO ELSEVIER
\makeatletter
\def\ps@pprintTitle{%
 \let\@oddhead\@empty
 \let\@evenhead\@empty
 \def\@oddfoot{}%
 \let\@evenfoot\@oddfoot}
\makeatother

\usepackage{hyperref}
\usepackage{amsfonts}
\usepackage{epsfig}
\usepackage{makeidx}
\usepackage{amsmath,amssymb,amssymb,latexsym}
\usepackage{enumerate}
\usepackage{graphicx}
\usepackage{mathtools}

%\journal{Discrete Mathematics}

%\renewcommand\Affilfont{\itshape\small}

\DeclareMathOperator{\thin}{thin}
\DeclareMathOperator{\pthin}{pthin}

\DeclareMathOperator{\lmimw}{lmimw}
\DeclareMathOperator{\cutw}{cutw}

\newcommand{\clawh}{\mbox{claw}_h}

\newtheorem{theorem}{Theorem}

\newtheorem{proposition}[theorem]{Proposition}
\newtheorem{lemma}[theorem]{Lemma}
\newtheorem{corollary}[theorem]{Corollary}

\newtheorem{remark}{Remark}

\newenvironment{proof}{\textit{Proof.}}{\hfill $\Box$ \\}

\date{}
\begin{document}

\begin{frontmatter}

\title{On the thinness and proper thinness of a graph\fnref{Dedication}}

\author{Flavia Bonomo}
\address{Universidad de Buenos Aires. Facultad de Ciencias Exactas y Naturales. Departamento de Computaci\'on. Buenos Aires,
Argentina. / CONICET-Universidad de Buenos Aires. Instituto de
Investigaci\'on en Ciencias de la Computaci\'on (ICC). Buenos
Aires, Argentina.} \ead{fbonomo@dc.uba.ar}

\author{Diego de Estrada}
\address{Universidad de Buenos Aires. Facultad de Ciencias Exactas y Naturales. Departamento de Computaci\'on. Buenos Aires,
Argentina.}  \ead{destrada@dc.uba.ar}

\begin{abstract} Graphs with bounded thinness were defined in
2007 as a generalization of interval graphs. In this paper we
introduce the concept of proper thinness, such that graphs with
bounded proper thinness generalize proper interval graphs. We
study the complexity of problems related to the computation of
these parameters, describe the behavior of the thinness and proper
thinness under three graph operations, and relate thinness and
proper thinness to other graph invariants in the literature.
Finally, we describe a wide family of problems that can be solved
in polynomial time for graphs with bounded thinness, generalizing
for example list matrix partition problems with bounded size
matrix, and enlarge this family of problems for graphs with
bounded proper thinness, including domination problems.
\end{abstract}

\begin{keyword} interval graphs, proper interval graphs, proper thinness, thinness. \end{keyword}

\fntext[Dedication]{\textbf{For Pavol Hell on the Occasion of his
70th Birthday}}

\end{frontmatter}

\section{Introduction}\label{sec:intro}

A graph $G=(V,E)$ is \emph{$k$-thin} if there exist an ordering
$v_1, \dots , v_n$ of $V$ and a partition of $V$ into $k$ classes
$(V^1,\dots,V^k)$ such that, for each triple $(r,s,t)$ with
$r<s<t$, if $v_r$, $v_s$ belong to the same class and $v_t v_r \in
E$, then $v_t v_s \in E$. The minimum $k$ such that $G$ is
$k$-thin is called the \emph{thinness} of $G$. The thinness is
unbounded on the class of all graphs, and graphs with bounded
thinness were introduced in~\cite{M-O-R-C-thinness} as a
generalization of interval graphs, which are exactly the $1$-thin
graphs. When a representation of the graph as a $k$-thin graph is
given, for a constant value $k$, some NP-complete problems as
maximum weighted independent set and bounded coloring with fixed
number of colors can be solved in polynomial
time~\cite{M-O-R-C-thinness,B-M-O-thin-tcs}. These algorithms were
respectively applied for improving heuristics of two real-world
problems: the Frequency Assignment Problem in GSM
networks~\cite{M-O-R-C-thinness}, and the Double Traveling
Salesman Problem with Multiple Stacks~\cite{B-M-O-thin-tcs}. In
this work we propose a framework to describe a wide family of
problems that can be solved by dynamic programming techniques on
graphs with bounded thinness, when the $k$-thin representation of
the graph is given. These problems generalize for example the list
matrix partition problems for matrices of bounded
size~\cite{F-H-K-N-P-list-matrix-part}.

We introduce here the concept of \emph{proper thinness} of a graph
with the aim of generalizing proper-interval graphs: graphs that
are proper $1$-thin are exactly proper interval graphs (see
Section \ref{sec:thin} for a definition). We extend the framework
in order to solve in polynomial time by dynamic programming many
of the domination-type problems in the literature (e.g. classified
in \cite{Arne-dom}) and their weighted versions, such as
existence/minimum (weighted) independent dominating set, minimum
(weighted) total dominating set, minimum perfect dominating set
and existence/minimum (weighted) efficient dominating set, for the
class of graphs with bounded proper thinness $k$, when the proper
$k$-thin representation of the graph is given.

The organization of the paper is the following. In
Section~\ref{sec:thin} we state the main definitions and present
some basic results on thinness. In Section~\ref{sec:algo-thin}, we
study some problems related to the recognition of $k$-thin
graphs and proper $k$-thin graphs. We analyze the computational
complexity of finding a suitable vertex partition when a vertex
ordering is given, and, conversely, finding a vertex ordering when
a vertex partition is given.

In Section~\ref{sec:width} we survey the relation of thinness and
other width parameters in graphs. In Section~\ref{sec:interv} we
relate the proper thinness of interval graphs to other interval
graph invariants, as interval count and chains of nested
intervals.

In Section~\ref{sec:algo-thin-comb} we describe a wide family of
problems that can be solved in polynomial time for graphs with
bounded thinness, when the representation is given. In
Section~\ref{sec:algo-pthin-comb} we extend that family to include
dominating-like problems that can be solved in polynomial time for
graphs with bounded proper thinness.

In Section~\ref{sec:thin-and-oper} we describe the behavior of the
thinness and proper thinness under three graph operations: union,
join, and Cartesian product. The first two results allow us to
fully characterize $k$-thin graphs by forbidden induced subgraphs
within the class of cographs. The third result is used to show the
polynomiality of the $t$-rainbow domination problem for fixed $t$
on graphs with bounded thinness.

\section{Definitions and basic results}\label{sec:thin}

All graphs in this work are finite, undirected, and have no loops
or multiple edges. For all graph-theoretic notions and notation
not defined here, we refer to West~\cite{West}. Let $G$ be a
graph. Denote by $V(G)$ its vertex set, by $E(G)$ its edge set, by
$\overline G$ its complement, by $N(v)$ the neighborhood of a
vertex $v$ in $G$, by $N[v]$ the closed neighborhood
$N(v)\cup\{v\}$, and by $\overline{N}(v)$ the non-neighbors of
$v$. If $X \subseteq V(G)$, denote by $N(X)$ the set of vertices
not in $X$ having at least one neighbor in $X$. %A graph is called \emph{trivial} if it has only one vertex.

Denote by $G[W]$ the subgraph of $G$ induced by $W\subseteq V(G)$,
and by $G - W$ or $G \setminus W$ the graph $G[V(G)\setminus W]$.
A subgraph $H$ (not necessarily induced) of $G$ is a
\emph{spanning subgraph} if $V(H)=V(G)$.

Denote the size of a set $S$ by $|S|$. A \emph{clique} (resp.\
\emph{stable set}) is a set of pairwise adjacent (resp.\
nonadjacent) vertices. We use \emph{maximum} to mean
maximum-sized, whereas \emph{maximal} means inclusion-wise
maximal. The use of \emph{minimum} and \emph{minimal} is
analogous.

Denote by $K_n$ the graph induced by a clique of size $n$. A
\emph{claw} is the graph isomorphic to $K_{1,3}$. Let $H$ be a
graph and $t$ a natural number. The disjoint union of $t$ copies
of the graph $H$ is denoted by $tH$.

For a positive integer $r$, the \emph{$(r \times r)$-grid} is the
graph whose vertex set is $\{(i,j) : 1 \leq i, j \leq r\}$ and
whose edge set is $\{(i,j)(k,l) : |i - k| + |j - l| = 1, \mbox{
where } 1 \leq i,j,k, l \leq r \}$.

A \emph{dominating set} in a graph is a set of vertices such that
each vertex outside the set has at least one neighbor in the set. %variations

A \emph{coloring} of a graph is an assignment of colors to its
vertices such that any two adjacent vertices are assigned
different colors. The smallest number $t$ such that $G$ admits a
coloring with $t$ colors (a \emph{$t$-coloring}) is called the
\emph{chromatic number} of $G$ and is denoted by $\chi(G)$. A
coloring defines a partition of the vertices of the graph into
stable sets, called \emph{color classes}. List variations of the
vertex coloring problem can be found in the literature. For a
survey on that kind of related problems, see~\cite{Tz-color}. In
the \emph{list-coloring problem}, every vertex $v$ comes equipped
with a list of permitted colors $L(v)$ for it.

For a symmetric matrix $M$ over $0, 1, *$, the \emph{$M$-partition
problem} seeks a partition of the vertices of the input graph into
independent sets, cliques, or arbitrary sets, with certain pairs
of sets being required to have no edges, or to have all edges
joining them, as encoded in the matrix $M$: $M_{ii} = 1$ means the
$i$-th set is a clique, while $M_{ii} = 0$ means the $i$-th set is
a stable set; for $i\neq j$, $M_{ij} = 1$ means every vertex of
the $i$-th set is adjacent to every vertex of the $j$-th set,
while $M_{ij} = 0$ means there are no edges from the $i$-th set to
the $j$-th set. Moreover, the vertices of the input graph can be
equipped with lists, restricting the parts to which a vertex can
be placed. In that case the problem is know as a \emph{list matrix
partition problem}. Such (list) matrix partition problems
generalize (list) colorings and (list)
homomorphisms~\cite{F-H-K-N-P-list-matrix-part}.

%%%LIst partition problems

When discussing about algorithms and data structures, we denote by
$n$ the number of vertices of the input graph $G$.

Given a graph $G$, a weight function $w$ on $V(G)$, and a subset
$S \subseteq V(G)$, the weight of $S$, denoted by $w(S)$ is
defined as $\sum_{v \in S} w(v)$.
%A weight $w$ is called \emph{bounded} if there is a constant $B$ such that $w(v) \leq B$ for every $v\in V(G)$.

A class of graphs is \emph{hereditary} when if a graph $G$ is in
the class, then every induced subgraph of $G$ is in the same
class.

A graph is a \emph{cograph} if it contains no induced path of
length four.

 A graph $G(V,E)$ is a \emph{comparability graph} if
there exists an ordering $v_1, \dots , v_n$ of $V$ such that, for
each triple $(r,s,t)$ with $r<s<t$, if $v_r v_s$ and $v_s v_t$ are
edges of $G$, then so is $v_r v_t$.\ Such an ordering is a
\emph{comparability ordering}. A graph is a \emph{co-comparability
graph} if its complement is a comparability graph.

A graph $G$ is an \emph{interval graph} if to each vertex $v \in
V(G)$, can be associated a closed interval $I_v=[l_v,r_v]$ of the real line, such that two distinct vertices $u,v \in V(G)$ are adjacent if and only if $I_u \cap I_v \neq \emptyset$. The family
$\{I_v\}_{v \in V(G)}$ is an \emph{interval representation} of
$G$. An undirected graph $G$ is a \emph{proper interval graph} if
there is an interval representation of $G$ in which no interval
properly contains another. In the same way, an undirected graph
$G$ is a \emph{unit interval graph} if there is an interval
representation of $G$ in which all the intervals have the same
length.

In 1969, Roberts~\cite{Rob-uig} proved that the classes of proper
interval graphs, unit interval graphs, and interval graphs with no
claw as induced subgraph coincide.

The right-end ordering of the vertices of an interval graph
satisfies the following property: for each triple $(r, s, t)$ with
$r <s<t$, if $v_tv_r \in E$, then $v_tv_s \in E$. In other words,
the neighbors of vertex $t$ with index less than $t$ are $t-1,
t-2, \dots, t-d$ for some $d \geq 0$. Moreover, a graph $G$ is an
interval graph if and only if there exists an ordering of its
vertices satisfying the property above
\cite{R-PR-interval,Ola-interval}.

Let us repeat and extend the definition of $k$-thinness given in
the introduction. A graph $G=(V,E)$ is \emph{$k$-thin} if there
exist an ordering $v_1, \dots , v_n$ of $V$ and a partition of $V$
into $k$ classes such that, for each triple $(r, s, t)$ with $r
<s<t$, if $v_r$, $v_s$ belong to the same class and $v_tv_r \in
E$, then $v_tv_s \in E$. An ordering and a partition satisfying
those properties are said to be \emph{consistent}. The minimum $k$
such that $G$ is $k$-thin is called the \emph{thinness} of $G$ and
denoted by $\thin(G)$.

The \emph{thinness} of a graph was introduced by Mannino, Oriolo,
Ricci, and Chandran in 2007 \cite{M-O-R-C-thinness}. Graphs with
\emph{bounded thinness} (thinness bounded by a constant value) are
a generalization of interval graphs, that are exactly the graphs
with thinness 1, and capture some of their algorithmic properties.

Let $\overline{tK_2}$ be the complement of a matching of size~$t$.

\begin{theorem}\emph{\cite{M-O-R-C-thinness}}\label{thm:tK2} For every $t \geq
1$, $\thin(\overline{tK_2})=t$. \end{theorem}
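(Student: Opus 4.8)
The plan is to prove the two inequalities $\thin(\overline{tK_2}) \le t$ and $\thin(\overline{tK_2}) \ge t$ separately. Write $V(\overline{tK_2}) = \{a_1,b_1,\dots,a_t,b_t\}$, where the pairs $\{a_i,b_i\}$ are exactly the non-edges; thus in $\overline{tK_2}$ every vertex $v$ has a single non-neighbor, namely its partner, so that $\overline{N}(v)$ is a singleton for each $v$. This one-non-neighbor-per-vertex structure is the feature that drives the whole argument.

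For the upper bound I would exhibit an explicit consistent ordering and partition with $t$ classes: order the vertices as $a_1, b_1, a_2, b_2, \dots, a_t, b_t$ and take the classes to be the pairs $V^i = \{a_i, b_i\}$. The only pairs of equally-classed vertices are the consecutive pairs $(a_i, b_i)$, and for any triple $(r,s,u)$ with $v_r = a_i$ and $v_s = b_i$ the vertex $v_u$ lies strictly after $b_i$, hence $v_u \neq a_i, b_i$ and so $v_u$ is adjacent to both $a_i$ and $b_i$; the consistency condition then holds trivially. Checking this is routine.

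The substance is the lower bound. I would first reinterpret consistency through non-neighbors: if $v_r$ and $v_s$ lie in the same class with $r<s$, I claim the (unique) non-neighbor of $v_s$ must precede $v_s$ in the ordering. Indeed, suppose the non-neighbor $z$ of $v_s$ occurred at some position $u > s$. Since $z$'s only non-neighbor is $v_s$ and $v_r \neq v_s$, the vertex $z$ is adjacent to $v_r$, so the triple $(r,s,u)$ has $v_u v_r \in E$; consistency would then force $v_u v_s = z v_s \in E$, contradicting that $z$ is the non-neighbor of $v_s$. Call a vertex \emph{bad} if its non-neighbor comes after it in the ordering, and \emph{good} otherwise. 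The claim says that within each class only the earliest-placed vertex can be bad, so each class contains at most one bad vertex.

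To finish I would count bad vertices. Regardless of the chosen ordering, for each pair $\{a_i,b_i\}$ the earlier of the two endpoints has its non-neighbor (the other endpoint) after it and is therefore bad, while the later one is good; hence there are exactly $t$ bad vertices. Since any consistent partition assigns at most one bad vertex to each class, at least $t$ classes are needed, giving $\thin(\overline{tK_2}) \geq t$. Combined with the upper bound this yields equality. I expect the only delicate point to be the translation of the consistency condition into the statement about non-neighbors, and in particular keeping track of which endpoint plays the role of the test vertex $v_u$ that sits at the end of the triple.
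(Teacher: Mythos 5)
Your proof is correct. Note that the paper itself offers no proof of this statement: Theorem~\ref{thm:tK2} is quoted from \cite{M-O-R-C-thinness}, so the comparison is against that reference rather than an in-paper argument. Your upper bound (classes $V^i=\{a_i,b_i\}$ with the pairs placed consecutively, consistency holding vacuously because every later vertex is adjacent to both endpoints of an earlier pair) is the standard construction. Your lower bound is also sound and matches the spirit of the original argument: your claim that the unique non-neighbor of any non-earliest vertex of a class must precede it is exactly the consistency condition applied to the triple $(r,s,u)$ with $v_u$ the partner of $v_s$, and the check that $z\neq v_r$ and $zv_r\in E$ (since $z$'s only non-neighbor is $v_s$) is the one delicate point, which you handle correctly. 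Counting then gives exactly $t$ \emph{bad} vertices (one per non-edge, namely the earlier endpoint) and at most one bad vertex per class, so $\thin(\overline{tK_2})\geq t$. An equivalent phrasing, closer to \cite{M-O-R-C-thinness}, is that the $t$ earlier endpoints are pairwise in distinct classes: if $a_i<a_j$ shared a class, the partner $b_j>a_j$ is adjacent to $a_i$ but not to $a_j$, violating consistency; your bad-vertex formulation packages the same obstruction.
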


%%proper thinness

The right-end ordering of the vertices of a proper interval graph
satisfies the following property: for each triple $(r, s, t)$ with
{$r <s<t$}, if {$v_tv_r \in E$}, then {$v_tv_s \in E$} and
{$v_rv_s \in E$}. In other words, the neighbors of vertex {$t$}
with index less than $t$ are {$t-1, t-2, \dots, t-d$}, and those
with index greater than $t$ are {$t+1, t+2, \dots, t+d'$}.
Moreover, $G$ is a proper interval graph if and only if there
exists an ordering of its vertices satisfying the property
above~\cite{D-H-H-circ-arc,L-O-prop-int}.

We define the concept of \emph{proper thinness} of graphs as
follows.

A graph $G=(V,E)$ is \emph{proper $k$-thin} if there exist an
ordering $v_1, \dots , v_n$ of $V$ and a partition of $V$ into $k$
classes $(V^1,\dots,V^k)$ such that, for each triple $(r,s,t)$
with $r<s<t$, if $v_r$, $v_s$ belong to the same class and $v_t
v_r \in E$, then $v_t v_s \in E$ and if $v_s$, $v_t$ belong to the
same class and $v_r v_t \in E$, then $v_r v_s \in E$.
Equivalently, $G$ is {proper $k$-thin} if {both $v_1, \dots , v_n$
and $v_n, \dots , v_1$} are consistent with the partition. In this
case, the partition and the ordering $v_1,\dots,v_n$ are said to
be \emph{strongly consistent}, and the minimum $k$ such that $G$
is proper $k$-thin is called the \emph{proper thinness} of $G$ and
denoted by {$\pthin(G)$}.

Since $k$-thin graphs are defined as a generalization of interval
graphs, proper $k$-thin graphs arise naturally as a generalization
of proper interval graphs. It can be seen that a graph is proper
$1$-thin if and only if it is a proper interval graph. Moreover,
the proper-thinness of the class of interval graphs is unbounded
(See Proposition~\ref{prop:clawh}).

%The aim of introducing this concept is to be able to deal with a
%more general family of problems, including domination-type ones,
%on graphs with bounded proper thinness. We will give more details
%in Section~\ref{sec:algo-pthin-comb}.

\bigskip

%multitrack interval models all starting at a same point

\section{Algorithmic aspects}\label{sec:algo-thin}

\bigskip

We will deal in this section with some questions related to the
recognition problem of (proper) $k$-thin graphs. The recognition
problem itself is open so far for both classes, but we will show
that, given a vertex ordering of a graph, we can find in
polynomial time a partition into a minimum number of classes which
is (strongly) consistent with the ordering. On the other hand, we
will show that given a graph and a vertex partition, it is
NP-complete to decide if there exists an ordering of the vertices
of the graph which is (strongly) consistent with the partition.

\begin{theorem}\label{thm:thin-comp-order} Given a graph $G$ and an ordering ${<}$ of its vertices, one can find in polynomial time graphs $G_<$ and $\tilde{G}_<$ with the following properties:
    \begin{enumerate}[(1)]
\item $V(G_<)=V(\tilde{G}_<)=V(G)$;

\item\label{item:2} the chromatic number of $G_<$ (resp.
$\tilde{G}_<$) is equal to the minimum integer $k$ such that there
is a partition of $V(G)$ into $k$ sets that is consistent (resp.
strongly consistent) with the order $<$, and the color classes of
a valid coloring of $G_{<}$ (resp. $\tilde{G}_<$) form a partition
consistent (resp. strongly consistent) with ${<}$;

\item $G_<$ and $\tilde{G}_<$ are co-comparability graphs.
\end{enumerate}
In particular, the minimum integer $k$ as in (\ref{item:2}) and a
partition into $k$ vertex sets can be computed in polynomial time.
Moreover, if $G$ is a co-comparability graph and ${<}$ a
comparability ordering of $\overline{G}$, then $G_{<}$ and
$\tilde{G}_<$ are spanning subgraphs of $G$.
\end{theorem}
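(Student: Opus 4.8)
The plan is to define the auxiliary graphs $G_<$ and $\tilde{G}_<$ directly from the ordering so that the "consistency" condition becomes exactly a proper-coloring condition. The key observation is that the consistency requirement forbids certain pairs of vertices from sharing a class: namely, if placing $v_r$ and $v_s$ (with $r<s$) in the same class would force a violation, then we must make them adjacent in $G_<$. Concretely, for a triple $(r,s,t)$ with $r<s<t$, the condition fails when $v_t v_r \in E$ but $v_t v_s \notin E$; this is a reason why $v_r$ and $v_s$ cannot be in the same class. So I would define $G_<$ on vertex set $V(G)$ by making $v_r v_s$ an edge (for $r<s$) exactly when there exists some $t>s$ with $v_t v_r \in E$ and $v_t v_s \notin E$. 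Then a partition is consistent with $<$ if and only if each class is a stable set of $G_<$, i.e.\ the partition is a proper coloring of $G_<$. This immediately gives property (2) for $G_<$: the chromatic number equals the minimum number of consistent classes, and color classes are exactly consistent partitions. For $\tilde{G}_<$ I would additionally encode the reversed condition: add the edge $v_r v_s$ also when there is some $t'<r$ with $v_{t'} v_s \in E$ and $v_{t'} v_r \notin E$, capturing the requirement that $v_n,\dots,v_1$ be consistent as well; the equivalence with strong consistency follows from the "equivalently" characterization in the definition of proper thinness.

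First I would verify the routine equivalence between consistent (resp.\ strongly consistent) partitions and proper colorings of $G_<$ (resp.\ $\tilde{G}_<$), which establishes properties (1) and (2); property (1) is immediate since both graphs are built on $V(G)$. The construction is clearly polynomial, as it inspects $O(n^3)$ triples. The remaining and most substantial claim is property (3): that $G_<$ and $\tilde{G}_<$ are co-comparability graphs. To prove this I would exhibit a comparability ordering of the complement, and the natural candidate is the ordering $<$ itself (or its reverse). That is, I claim that $<$ is a comparability ordering of $\overline{G_<}$: for $r<s<t$, if $v_r v_s \notin E(G_<)$ and $v_s v_t \notin E(G_<)$, then $v_r v_t \notin E(G_<)$. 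Unwinding the definition, the hypotheses say that no "bad" later vertex separates $r$ from $s$ nor $s$ from $t$, and I must show none separates $r$ from $t$. The hard part will be checking this transitivity carefully, handling the interaction of the two types of separating vertices in the definition of $\tilde{G}_<$, and it is the step I expect to be the main obstacle.

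Finally I would address the "Moreover" clause. Suppose $G$ is a co-comparability graph and $<$ is a comparability ordering of $\overline{G}$; I must show $G_<$ and $\tilde{G}_<$ are spanning subgraphs of $G$, i.e.\ every edge of $G_<$ is an edge of $G$. Take an edge $v_r v_s$ of $G_<$ with $r<s$, witnessed by some $t>s$ with $v_t v_r \in E(G)$ and $v_t v_s \notin E(G)$. I want to conclude $v_r v_s \in E(G)$. Here I would use that $<$ is a comparability ordering of $\overline{G}$: working in the complement, $v_t v_s \notin E(G)$ means $v_s v_t \in E(\overline{G})$, and if $v_r v_s \notin E(G)$ then $v_r v_s \in E(\overline{G})$, so from $r<s<t$ and the two $\overline{G}$-edges $v_r v_s$ and $v_s v_t$, the comparability property of $\overline{G}$ would force $v_r v_t \in E(\overline{G})$, i.e.\ $v_r v_t \notin E(G)$, contradicting $v_t v_r \in E(G)$. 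Hence $v_r v_s \in E(G)$, so $G_<$ is a spanning subgraph; the argument for $\tilde{G}_<$ combines this with the symmetric statement obtained by applying comparability in the reverse direction.
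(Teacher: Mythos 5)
Your construction coincides exactly with the paper's: the same definition of $G_<$ (an edge $v_rv_s$, $r<s$, whenever some $z>s$ is adjacent to $v_r$ but not $v_s$), the same augmentation for $\tilde{G}_<$ (adding the edge when some $x<r$ is adjacent to $v_s$ but not $v_r$), the same coloring correspondence for property (2), and your ``Moreover'' argument is verbatim the paper's: assuming $v_rv_s\notin E(G)$, the $\overline{G}$-edges $v_rv_s$ and $v_sv_t$ with $r<s<t$ force $v_rv_t\in E(\overline{G})$ by the comparability ordering, contradicting the witness.

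The one place where you stop short is the step you yourself flag as the expected main obstacle: verifying that $<$ is a comparability ordering of $\overline{G_<}$ and $\overline{\tilde{G}_<}$. This is a genuine gap as written, but it closes with a two-case analysis, and the ``interaction of the two types of separating vertices'' you worry about never materializes --- each witness type transfers independently. In contrapositive form one must show: if $r<s<t$ and $v_rv_t\in E(G_<)$, then $v_rv_s\in E(G_<)$ or $v_sv_t\in E(G_<)$. Take a witness $z>t$ with $v_zv_r\in E(G)$, $v_zv_t\notin E(G)$. If $v_zv_s\notin E(G)$, then $z$ itself witnesses $v_rv_s\in E(G_<)$ (since $r<s<z$); if $v_zv_s\in E(G)$, then $z$ witnesses $v_sv_t\in E(G_<)$ (since $s<t<z$). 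For $\tilde{G}_<$, a backward witness $x<r$ with $v_xv_t\in E(G)$, $v_xv_r\notin E(G)$ is handled symmetrically: if $v_xv_s\in E(G)$ then $x$ witnesses $v_rv_s$, otherwise $x$ witnesses $v_sv_t$, in both cases producing a backward-type edge of $\tilde{G}_<$. So the same single witness always serves for one of the two shorter pairs, with no mixing of witness types. One last small omission: the ``In particular'' clause needs not just property (3) but the fact that co-comparability graphs (being perfect) can be optimally colored in polynomial time, which is the paper's citation of Golumbic; without some such remark, property (2) alone does not yield the polynomial-time computation of $k$ and the partition.
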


\begin{proof}
Let $G$ be a graph and ${<}$ an ordering of its vertices. We will
build a graph $G_{<}$ such that $V(G_{<})=V(G)$, and $v < w$ are
adjacent in $G_{<}$ if and only if they cannot belong to the same
class of a partition which is consistent with $<$. By definition
of consistency, this happens if and only if there is a vertex $z$
in $G$ such that $v < w < z$, $z$ is adjacent to $v$ and
nonadjacent to $w$. So define $E(G_{<})$ such that for $v < w$,
$vw \in E(G_{<})$ if and only if there is a vertex $z$ in $G$ such
that $v < w < z$, $zv \in E(G)$ and $zw \not \in E(G)$.

We build $\tilde{G}_<$ in a similar way. In this case, for $v <
w$, $vw \in E(\tilde{G}_{<})$ if and only if either there is a
vertex $z$ in $G$ such that $v < w < z$, $zv \in E(G)$ and $zw
\not \in E(G)$ or there is a vertex $x$ in $G$ such that $x < v <
w$, $xw \in E(G)$ and $xv \not \in E(G)$.

Let us see that $<$ is a comparability ordering both for
$\overline{G_{<}}$ and $\overline{\tilde{G}_{<}}$. Suppose on the
contrary that there is a triple $r < s < t$ in $V(G)$ such that
$rs$, $st$ are edges of $\overline{G_{<}}$ (resp.
$\overline{\tilde{G}_{<}}$) and $rt$ is not an edge of
$\overline{G_{<}}$ (resp. $\overline{\tilde{G}_{<}}$). By
definition of $G_{<}$ (resp. $\tilde{G}_{<}$), there is a vertex
$z$ such that $r < s < t < z$, $zr \in E(G)$ and $zt \not \in
E(G)$ (resp. either there is a vertex $z$ such that $r < s < t <
z$, $zr \in E(G)$ and $zt \not \in E(G)$, or there is a vertex $x$
in $G$ such that $x < r < s < t$, $xt \in E(G)$ and $xr \not \in
E(G)$). If $zs \not \in E(G)$, then $rs$ is an edge of $G_{<}$
(resp. $\tilde{G}_{<}$), a contradiction. If $zs \in E(G)$, then
$st$ is an edge of $G_{<}$ (resp. $\tilde{G}_{<}$), a
contradiction as well. The case of $x$ for $\tilde{G}_{<}$ is
symmetric, if $xs \not \in E(G)$, then $st$ is an edge of
$\tilde{G}_{<}$, a contradiction. If $xs \in E(G)$, then $rs$ is
an edge of $\tilde{G}_{<}$, a contradiction as well. So $G_{<}$
and $\tilde{G}_{<}$ are co-comparability graphs, being $<$ a
comparability ordering for $\overline{G_{<}}$ and
$\overline{\tilde{G}_{<}}$, respectively.

As we have defined $G_{<}$ (resp. $\tilde{G}_<$) such that
$V(G_{<})=V(\tilde{G}_{<})=V(G)$, and $v < w$ are adjacent in
$G_{<}$ (resp. $\tilde{G}_<$) if and only if they cannot belong to
the same class of a partition which is consistent (resp. strongly
consistent) with $<$, it follows that there is a one-to-one
correspondence between partitions of $V(G)$ consistent (resp.
strongly consistent) with $<$ and colorings of $G_{<}$ (resp.
$\tilde{G}_<$). In particular, the minimum $k$ such that there is
a partition of $V(G)$ into $k$ sets that is consistent (resp.
strongly consistent) with $<$ is the chromatic number of $G_{<}$
(resp. $\tilde{G}_<$). An optimum coloring of $G_{<}$ (resp.
$\tilde{G}_<$) can be computed in polynomial time \cite{Go-comp2}.

To complete the proof of the theorem, suppose now that $G$ is a
co-comparability graph and $<$ is a comparability ordering for
$\overline{G}$. Let $v < w$ adjacent in $G_{<}$ (resp.
$\tilde{G}_<$). By definition, there is a vertex $z$ in $G$ such
that $v < w < z$, $vz \in E(G)$ and $wz \not \in E(G)$ (resp.
either there is a vertex $z$ in $G$ such that $v < w < z$, $vz \in
E(G)$ and $wz \not \in E(G)$, or there is a vertex $x$ in $G$ such
that $x < v < w$, $xw \in E(G)$ and $xv \not \in E(G)$). If $vw
\not \in E(G)$, being $\overline{G}$ a comparability graph, $vz
\not \in E(G)$, a contradiction. So $vw \in E(G)$. This proves
that $G_{<}$ is a spanning subgraph of $G$. The case of $x$ for
$\tilde{G}_{<}$ is symmetric, if $vw \not \in E(G)$, being
$\overline{G}$ a comparability graph, $xw \not \in E(G)$, a
contradiction. So in any case $vw \in E(G)$. This proves that
$\tilde{G}_{<}$ is a spanning subgraph of $G$ as well.
\end{proof}

A direct consequence of this result is the following, that was
already proved in \cite{B-M-O-thin-tcs} for the case of thinness.

\begin{corollary}\label{cor:chrom} If $G$ is a co-comparability graph, $\thin(G) \leq \pthin(G) \leq \chi(G)$.
Moreover, any vertex partition given by a coloring of $G$ and any
comparability ordering for its complement are strongly consistent.
\end{corollary}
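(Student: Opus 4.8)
The plan is to derive the corollary directly from Theorem~\ref{thm:thin-comp-order}. Since $G$ is a co-comparability graph, fix a comparability ordering ${<}$ for $\overline{G}$. By the final sentence of Theorem~\ref{thm:thin-comp-order}, for this choice of $G$ and ${<}$ the graphs $G_{<}$ and $\tilde{G}_{<}$ are spanning subgraphs of $G$. My strategy is to read off the two inequalities by comparing chromatic numbers, using the monotonicity of $\chi$ under the spanning-subgraph relation together with part~(\ref{item:2}) of the theorem.

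First I would establish $\pthin(G) \le \chi(G)$. By definition $\tilde{G}_{<}$ is a spanning subgraph of $G$, so $E(\tilde{G}_{<}) \subseteq E(G)$ and hence $\chi(\tilde{G}_{<}) \le \chi(G)$. By part~(\ref{item:2}), $\chi(\tilde{G}_{<})$ equals the minimum size of a partition strongly consistent with ${<}$, which is at least $\pthin(G)$ since $\pthin(G)$ minimizes over all orderings. Combining, $\pthin(G) \le \chi(\tilde{G}_{<}) \le \chi(G)$. The inequality $\thin(G) \le \pthin(G)$ is immediate from the definitions: any partition and ordering that are strongly consistent are in particular consistent (strong consistency requires both ${<}$ and its reverse to be consistent), so a proper $k$-thin representation is a $k$-thin representation, giving $\thin(G) \le \pthin(G)$.

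For the moreover part, I would argue that strong consistency holds for any coloring of $G$ paired with any comparability ordering ${<}$ of $\overline{G}$. Take such a coloring; its color classes form a partition into stable sets, so no two vertices in the same class are adjacent in $G$. I must check both consistency conditions. Suppose $r<s<t$ with $v_r,v_s$ in the same class and $v_tv_r \in E(G)$; since $v_r,v_s$ are nonadjacent in $G$, the pair $v_rv_s$ is an edge of $\overline{G}$, and using that ${<}$ is a comparability ordering of $\overline{G}$ one deduces the required adjacency $v_tv_s \in E(G)$ (otherwise $v_tv_s \in E(\overline G)$ would force, via the comparability transitivity applied to $v_r v_s$ and $v_s v_t$, the edge $v_rv_t \in E(\overline G)$, contradicting $v_tv_r \in E(G)$). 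The reverse condition is symmetric. Thus the partition and ordering are strongly consistent.

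The main obstacle is not the inequalities, which are essentially bookkeeping on top of the theorem, but making the \emph{moreover} clause precise: one must carefully translate the comparability-transitivity property of $\overline{G}$ into the two separate consistency conditions, keeping track of which non-edges of $G$ (i.e.\ edges of $\overline{G}$) are forced by the transitive order. Because the comparability relation on $\overline{G}$ is exactly what controls when two nonadjacent vertices can be forced apart, I expect the verification to reduce cleanly to a single application of the comparability property in each of the two symmetric cases, as sketched above.
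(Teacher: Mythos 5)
Your proof is correct, and for the two inequalities it follows exactly the route the paper intends: the paper presents the corollary as a direct consequence of Theorem~\ref{thm:thin-comp-order}, reading $\pthin(G) \le \chi(\tilde{G}_{<}) \le \chi(G)$ off the spanning-subgraph property together with part~(\ref{item:2}), just as you do, with $\thin \le \pthin$ immediate from the definitions. Where you diverge is the \emph{moreover} clause. The paper's implicit argument stays inside the theorem's machinery: since $\tilde{G}_{<}$ is a spanning subgraph of $G$, any proper coloring of $G$ is in particular a proper coloring of $\tilde{G}_{<}$, and part~(\ref{item:2}) already states that the color classes of a valid coloring of $\tilde{G}_{<}$ form a partition strongly consistent with ${<}$ --- no further verification is needed. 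You instead verify strong consistency from first principles, applying comparability transitivity in $\overline{G}$ to each of the two symmetric triples. Your verification is sound (in the first case, $v_rv_s, v_sv_t \in E(\overline{G})$ with $r<s<t$ would force $v_rv_t \in E(\overline{G})$, contradicting $v_tv_r \in E(G)$; the reverse case is symmetric), but note that it essentially re-derives, in unpacked form, the same transitivity argument the paper uses once to prove that $\tilde{G}_{<}$ is a spanning subgraph of $G$ in the last paragraph of the theorem's proof. What your route buys is a self-contained proof of the moreover clause that does not mention the auxiliary graph $\tilde{G}_{<}$ at all; what the paper's route buys is economy, since the coloring-to-partition correspondence was already established in full generality. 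Either way the corollary stands, and as a small remark, once you have the moreover clause your inequality $\pthin(G) \le \chi(G)$ also follows directly from it, so your first paragraph could be shortened.
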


As already observed in~\cite{B-M-O-thin-tcs}, the bound $\thin(G)
\leq \pthin(G) \leq \chi(G)$ for co-comparability graphs can be
arbitrarily bad: for example, if $G$ is a clique of size $n$, then
$\thin(G)= \pthin(G) = 1$ and $\chi(G) = n$. However, it holds
with equality for graphs $\overline{tK_2}$, because
$\thin(\overline{tK_2})=\pthin(\overline{tK_2})=\chi(\overline{tK_2})=t$
(Theorem~\ref{thm:tK2} and Corollary~\ref{cor:chrom}). \\

In contrast with Theorem \ref{thm:thin-comp-order}, if a partition
is given, it is NP-complete to decide the existence of a
(strongly) consistent ordering. \\

\noindent \textsc{(Strongly) Consistent ordering with a given partition}\\
\emph{Instance:} A graph $G=(V,E)$ and a partition of $V$ into
non-empty subsets.\\
\emph{Question:} Does there exist a total order $<$ of $V$
(strongly) consistent with the given partition? \\

The proof is based on a reduction from the following problem,
which is known to be NP-complete \cite{Guttmann2006}. \\

\noindent \textsc{Non-Betweenness}\\
\emph{Instance:} A finite set $A$ and a collection $S$ of ordered triples of distinct elements of $A$.\\
\emph{Question:} Does there exist a total order $<$ of $A$ such
that for each $(x,y,z)\in S$, it is never the case that $x<y<z$ or
$z<y<x$ (i.e. $y$ is \emph{not between} $x$ and $z$)? \\

We start with an easy lemma.

\begin{lemma}\label{lem:order-match}
Let $G$ be a graph, $<$ an ordering of $V(G)$ and $V_1,\dots,V_k$
a partition of $V(G)$ that is consistent with $<$. Let
$\{x_i,y_i\} \subseteq V_i$, for $i=1,2$, such that $x_1x_2$ and
$y_1y_2$ are the only edges between $\{x_1,y_1\}$ and
$\{x_2,y_2\}$. Then $x_1 < y_1$ if and only if $x_2 < y_2$.
\end{lemma}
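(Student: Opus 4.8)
The plan is to exploit the consistency condition directly on carefully chosen triples among the four vertices $x_1,y_1,x_2,y_2$. First I would record what the hypothesis gives us at the level of edges: the two edges are $x_1x_2,y_1y_2\in E(G)$, while $x_1y_2\notin E(G)$ and $y_1x_2\notin E(G)$ (these are the non-edges, since the stated two are the \emph{only} edges between the two pairs). I would also observe at the outset that the whole configuration is symmetric under swapping the subscripts $1$ and $2$ (relabelling $V_1\leftrightarrow V_2$, $x_1\leftrightarrow x_2$, $y_1\leftrightarrow y_2$ leaves both the hypotheses and the consistency of the partition intact, and turns the claim $x_1<y_1\iff x_2<y_2$ into itself). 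Hence it suffices to prove a single implication, say $x_1<y_1\Rightarrow x_2<y_2$, and the converse comes for free.

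So I would assume $x_1<y_1$ and argue in two steps. The first step is to locate $x_2$ relative to $y_1$: I claim $x_2<y_1$. Suppose not, so $y_1<x_2$. Then $x_1<y_1<x_2$ is a triple with $x_1,y_1$ in the common class $V_1$ and $x_2x_1\in E(G)$, so consistency of the partition with $<$ forces $x_2y_1\in E(G)$, contradicting the non-edge $y_1x_2\notin E(G)$. This rules out $y_1<x_2$ and yields $x_2<y_1$.

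The second step closes the argument. Suppose for contradiction that $x_2>y_2$, i.e.\ $y_2<x_2$. Combining this with $x_2<y_1$ from the first step gives the chain $y_2<x_2<y_1$. Now $y_2,x_2$ lie in the common class $V_2$ and $y_1y_2\in E(G)$, so applying consistency to the triple $y_2<x_2<y_1$ forces $y_1x_2\in E(G)$, again contradicting $y_1x_2\notin E(G)$. Therefore $x_2<y_2$, completing the implication; the reverse implication follows by the symmetry noted above.

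I do not expect a serious obstacle here, since the argument is a short case analysis. The only point requiring care is the bookkeeping: one must feed the consistency condition exactly the right triples, with the ``same-class'' pair occupying the first two positions and the established edge reaching from the third position, and one must verify that the intermediate inequality $x_2<y_1$ is genuinely forced before using it in the second triple. Keeping track of which two of the four cross-pairs are edges and which two are non-edges is what makes each contradiction go through.
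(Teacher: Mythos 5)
Your proof is correct and follows essentially the same route as the paper's: the decisive step---applying consistency to the triple $y_2 < x_2 < y_1$ with $x_2,y_2$ in the same class and $y_1y_2\in E(G)$---is exactly the paper's single application of the definition. The only difference is that the paper assumes by symmetry that $y_1$ is the largest of the four vertices, which makes your first step (deriving $x_2<y_1$ from the triple $x_1<y_1<x_2$) automatic, so you spend one extra consistency application in exchange for a weaker symmetry reduction.
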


\begin{proof}
By symmetry, let us assume that $y_1$ is the biggest vertex
according to $<$. Again by symmetry, to prove the lemma it is
enough to prove that $x_2 < y_2$. By definition of
consistency, since $x_2$ and $y_2$ are in the same class and $y_1$
is adjacent to $y_2$ but not to $x_2$, it is not possible that
$y_2 < x_2 < y_1$.
\end{proof}

\begin{theorem}\label{np-c-order} The problem \textsc{(Strongly) Consistent ordering with a given partition} is NP-complete. \end{theorem}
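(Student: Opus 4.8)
The plan is to prove NP-completeness by reduction from \textsc{Non-Betweenness}, using the machinery already set up. Membership in NP is immediate: a candidate ordering $<$ can be checked against the given partition in polynomial time by testing the (strong) consistency condition on all triples. For hardness, given an instance of \textsc{Non-Betweenness} with element set $A$ and triple collection $S$, I would construct a graph $G$ together with a vertex partition so that the orderings (strongly) consistent with the partition correspond exactly to the total orders of $A$ avoiding the forbidden betweenness patterns. The core idea is to encode each element $a \in A$ by a small gadget and each triple $(x,y,z) \in S$ by another gadget, wiring them together with ``parallel-edge'' configurations of the kind appearing in Lemma~\ref{lem:order-match} so that the relative order of the corresponding vertices is forced to mirror the order of the elements of $A$.

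The key technical device is Lemma~\ref{lem:order-match}: whenever two two-element pieces $\{x_1,y_1\}\subseteq V_i$ and $\{x_2,y_2\}\subseteq V_j$ are joined by exactly the two ``parallel'' edges $x_1x_2$ and $y_1y_2$, any consistent ordering must place $x_1$ before $y_1$ if and only if it places $x_2$ before $y_2$. This lets me propagate a chosen orientation across many copies of a pair of vertices that all represent the same element $a \in A$, so that their common order encodes the position of $a$. First I would set up, for each element $a$, several linked copies of an ordered pair in different classes, all tied together by such parallel-edge gadgets so that they are forced to agree on a single orientation; this orientation is the ``value'' of $a$ in the order. Then, for each triple $(x,y,z)\in S$, I would attach a betweenness-forbidding gadget: a configuration of vertices, placed in appropriate classes and adjacent to the copies representing $x,y,z$, whose (strong) consistency with the ordering is possible precisely when $y$ is not ordered strictly between $x$ and $z$. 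The forbidden patterns $x<y<z$ and $z<y<x$ should each create a triple $(r,s,t)$ violating the consistency condition, while every allowed relative order of $x,y,z$ leaves the gadget satisfiable.

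I would argue correctness in the two usual directions. If the \textsc{Non-Betweenness} instance has a solution, I would extend that order of $A$ to an order of $V(G)$ (ordering the gadget vertices consistently with the element orientations) and verify that it is (strongly) consistent with the partition, checking each gadget type separately. Conversely, from any (strongly) consistent ordering of $G$ I would read off an order of $A$ via the forced orientations of the element gadgets, and use the betweenness gadgets together with Lemma~\ref{lem:order-match} to show that no forbidden betweenness occurs. For the \emph{strongly} consistent variant the same construction should work, since strong consistency is just consistency of both $<$ and its reverse; I would design the gadgets symmetrically so that the reverse order condition imposes no extra constraints beyond those already handled, or equivalently so that a reversal only swaps the roles of $x$ and $z$ in each triple, which \textsc{Non-Betweenness} treats symmetrically.

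The main obstacle will be designing the betweenness gadget so that it forbids exactly the patterns $x<y<z$ and $z<y<x$ and nothing more, while remaining compatible with the parallel-edge propagation used for the element gadgets and not accidentally creating spurious consistency violations among gadget vertices themselves. In particular I must ensure that the edges introduced to enforce one triple do not interfere with the forced orientations of unrelated elements, and that the partition classes can be assigned globally (a constant or polynomial number of classes) so that all gadgets coexist. Getting this interference-free wiring right, and confirming that strong consistency adds no unforeseen obstruction, is the delicate part; once the gadget behaves correctly in isolation, the global reduction and its polynomial size follow routinely.
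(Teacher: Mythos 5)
You correctly identify the reduction source (\textsc{Non-Betweenness}), the role of Lemma~\ref{lem:order-match}, NP membership via the ordering as certificate, and the reason strong consistency comes essentially for free (reversal swaps the roles of $x$ and $z$, which \textsc{Non-Betweenness} treats symmetrically). But the core of the proof --- the concrete construction --- is left as an open design task, and the sketch you give of it is misdirected. You propose element gadgets consisting of several linked copies of an ordered pair of vertices representing the \emph{same} element $a$, with their common orientation serving as the ``value'' of $a$; this cannot work, because the relative order of two copies of $a$ carries no information about $a$'s position relative to the other elements. The correct use of Lemma~\ref{lem:order-match} is across \emph{distinct} elements: in the paper's construction each triple $(x,y,z)\in S$ gets its own class $V_i$ containing three copies $x^i,y^i,z^i$, there is one vertex per element of $A$ in a stable class $V_0$, and the copies of each element form a clique. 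The lemma is then applied to pairs such as $\{x^i,y^i\}\subseteq V_i$ and $\{x^0,y^0\}\subseteq V_0$, joined by exactly the two same-element edges $x^ix^0$ and $y^iy^0$, which forces the relative order of $x^i,y^i,z^i$ to mirror that of $x^0,y^0,z^0$.

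Moreover, the betweenness-forbidding gadget you defer --- and flag yourself as the delicate unresolved part --- requires no separate design at all: making $x^iz^i$ the \emph{only} edge inside $V_i$ already does the job. If $x^i<y^i<z^i$, then $x^i,y^i$ lie in the same class and $z^i$ is adjacent to $x^i$ but not to $y^i$, violating consistency, and symmetrically for $z^i<y^i<x^i$, while every other relative order of the three copies is unconstrained. Your plan also omits the converse direction's realizability argument, i.e., that every order of $A$ avoiding the forbidden patterns lifts to a (strongly) consistent order of $V(G)$; the paper achieves this by extending the order with all copies of each element consecutive, followed by a short case analysis. Since your proposal supplies neither a gadget meeting your stated ``precisely when'' specification nor such an extension argument, the reduction is not established as written, even though the overall strategy is the right one.
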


\begin{proof}
First note that \textsc{(Strongly) Consistent ordering with a
given partition} is in NP, by using the total order of $V$ as the
certificate.

Now let us prove its NP-hardness. Given an instance $(A,S)$ of
\textsc{Non-Betweenness}, build a graph $G=(V,E)$ and a partition
$V_0,V_1,\dots V_{|S|}$ of $V$ as follows.

Fix an ordering of the triples in $S$. Vertices of $V_0$ are in one-to-one correspondence with elements of $A$. For $i=1,\dots,|S|$, $V_i$ has $3$ vertices, and they are in a one-to-one correspondence with the elements of the $i$-th triple in $S$. Let us call
$a^i$ the element of $V_i$ that corresponds to $a \in A$, for
$i=0, \dots, |S|$.

Define the edges
of $G$ as follows: for each triple $(x,y,z)\in S$, let $V_i$ be
its corresponding set. The only edge in the subgraph induced by
$\{x^i,y^i,z^i\}$ is $x^iz^i$. The remaining edges of $G$ are all
the possible edges between vertices associated to the same $a\in
A$.

Suppose first there is an ordering $<$ consistent with the
partition $\{V_0,$ $\dots,$ $V_{|S|}\}$.  By
Lemma~\ref{lem:order-match}, for each $1 \leq i \leq |S|$, the
relative order of the vertices $x^i, y^i, z^i$ is the same as the
relative order of the vertices $x^0, y^0, z^0$. By definition of
consistency and since the only edge in the subgraph induced by
$\{x^i,y^i,z^i\}$ is $x^iz^i$, $y^i$ is not between $x^i$ and
$z^i$ in that order. So the order of the vertices in $V_0$ gives a
positive answer to the instance $(A,S)$ of
\textsc{Non-Betweenness}.

Suppose now that there is a valid order $<$ for the instance
$(A,S)$ of \textsc{Non-Betweenness}. We can extend $<$ to $V(G)$
by making consecutive all the copies in $V(G)$ of an element of
$A$. Now, let $p<q<r$ be three vertices of $G$ such that $p$, $q$
belong to the same class $V_i$ and $rp \in E(G)$. Since $V_0$ is a
stable set and the triples in $S$ satisfy the non-betweenness
condition, $r$ is not in $V_i$. So $r$ and $p$ correspond to the
same element $a$ of $A$, and since there is at most one copy of an
element of $A$ in each $V_i$, $q$ does not correspond to a copy of
$a$. But this contradicts the fact that all the vertices of $G$
that correspond to a same element of $A$ are consecutive. So the
situation cannot arise, and the extended order is consistent with
the partition. The case in which $q$, $r$ belong to the same class
$V_i$ is identical, and indeed the extended order is strongly
consistent with the partition.
\end{proof}

The computational complexity of the decision of existence of a
(strongly) consistent ordering when the number of sets in the
partition is fixed is still open. So is the computational
complexity of deciding if a graph is (proper) $k$-thin, even for
fixed $k \geq 2$. In the case of proper thinness, the problem is
open even within the class of interval graphs.

\section{Thinness and other width parameters}\label{sec:width}

\bigskip

Many width parameters are defined in the literature. In this
section we compile the results relating the thinness with some of
them, namely pathwidth~\cite{R-S-minors1-pw},
treewidth~\cite{B-B-treewidth,R-S-minors2-tw},
clique-width~\cite{Cour-cw}, cutwidth~\cite{Lenga-cutwidth},
mim-width~\cite{VatshelleThesis}, and boxicity~\cite{Rob-box}.

In~\cite{M-O-R-C-thinness} it was proved that the thinness of a graph is at most the pathwidth plus one, and that the gap may be high, since the pathwidth of a complete graph with $r$ vertices is $r-1$, while its thinness is $1$.

On the other hand, in~\cite{C-M-O-thinness-man} it was proved that
the boxicity is a lower bound for the thinness of a graph, and it
was pointed out that the difference can be large, as an $(r\times
r)$-grid has boxicity $2$ and thinness $\Theta(r)$.

The \emph{vertex isoperimetric peak} of a graph $G$, denoted as
$b_v(G)$, is defined as $\max_s \min_{X\subset V, |X|=s} |N(X)|$.
The thinness of the grid was estimated by using the following
result, that was also used in~\cite{B-C-thinness} to give a lower
bound of  the thinness of a complete binary tree. We will use
it as well to estimate the thinness of complete $m$-ary trees.

\begin{lemma}\emph{\cite{C-M-O-thinness-man}}\label{lem:peak}
For every graph $G$, $\thin(G)\geq b_v(G)/\Delta(G)$.
\end{lemma}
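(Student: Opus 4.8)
The plan is to prove the inequality $\thin(G) \geq b_v(G)/\Delta(G)$ by relating the neighborhood expansion of vertex sets to the number of thinness classes. Let $k = \thin(G)$, and fix an ordering $v_1, \dots, v_n$ together with a consistent partition into $k$ classes $V^1, \dots, V^k$. The key observation is that consistency severely restricts the adjacency pattern between an initial segment of the ordering and the vertices outside it. Specifically, for a prefix $S = \{v_1, \dots, v_s\}$, I would examine, within each class $V^j$, the neighbors of $S$ that lie outside $S$, and argue that they are controlled by a single ``extreme'' vertex of $V^j \cap S$.

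First I would make the counting precise. For each class $V^j$, consider the vertices of $V^j$ that belong to the prefix $S$, and among them let $v_r$ be the one with largest index (the last vertex of $V^j$ inside $S$). I claim that every vertex $v_t \notin S$ (so $t > s \geq r$) that has a neighbor in $V^j \cap S$ must in fact be adjacent to $v_r$. Indeed, if $v_t$ is adjacent to some $v_{r'} \in V^j \cap S$ with $r' < r < t$, then since $v_{r'}$ and $v_r$ are in the same class and $v_t v_{r'} \in E$, consistency forces $v_t v_r \in E$. Hence the outside-neighbors of $V^j \cap S$ are all contained in $N(v_r)$, and so the number of such outside-neighbors coming from class $V^j$ is at most $\deg(v_r) \leq \Delta(G)$.

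Summing over the $k$ classes, the total set $N(S)$ of vertices outside $S$ with a neighbor in $S$ satisfies $|N(S)| \leq k \cdot \Delta(G)$, since each vertex of $N(S)$ must have a neighbor in $S$, and that neighbor lies in some class $V^j$, whence the vertex lies in $N(v_r)$ for the representative $v_r$ of that class. This bound holds for the particular prefix $S$ determined by the ordering, for every size $s$. To extract the isoperimetric peak, I would let $s$ be the size achieving the maximum in the definition of $b_v(G)$; by definition $b_v(G) = \max_s \min_{|X| = s} |N(X)| \leq \max_s |N(\{v_1,\dots,v_s\})| \leq k\Delta(G)$, since for each $s$ the prefix is one particular set of size $s$ and the minimum over all sets of that size is at most the value on the prefix. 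Rearranging gives $k \geq b_v(G)/\Delta(G)$, as desired.

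The main obstacle I anticipate is getting the direction of the $\min$/$\max$ quantifiers right so that the prefix bound actually dominates $b_v(G)$. The subtlety is that $b_v(G)$ takes a minimum over all sets of a given size, so I cannot simply plug in the prefix for an arbitrary $s$; rather I must observe that the inequality $|N(S_s)| \leq k\Delta(G)$ holds uniformly in $s$ for the prefixes $S_s = \{v_1,\dots,v_s\}$, which lets me bound $\min_{|X|=s}|N(X)| \leq |N(S_s)| \leq k\Delta(G)$ for every $s$, and then take the maximum over $s$. Once the uniformity of the prefix bound is in hand, the quantifier manipulation is routine and the result follows.
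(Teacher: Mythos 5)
Your proof is correct: the key claim—that every vertex outside the prefix $S$ with a neighbor in $V^j \cap S$ must be adjacent to the last vertex $v_r$ of $V^j$ inside $S$—is exactly the consistency condition applied to the triple $(r', r, t)$, giving $N(S) \subseteq \bigcup_j N(v_{r_j})$ and hence $|N(S)| \leq k\Delta(G)$ uniformly in $s$, and your handling of the $\min$/$\max$ quantifiers in $b_v(G)$ is accurate. The paper states this lemma only with a citation to the Chandran--Mannino--Oriolo manuscript and includes no proof of its own, so there is nothing internal to compare against; your argument is the standard one for this bound and is complete as written.
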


Interval graphs have thinness $1$ and unbounded
clique-width~\cite{G-R-clique-width}, while cographs have
clique-width $2$~\cite{C-O-cw-cographs} and unbounded thinness,
because $\overline{tK_2}$ is a cograph for every $t$, so the
parameters are not comparable.

Complete graphs have high treewidth and thinness $1$, and
trees instead have treewidth $1$ but we have the following result.

\begin{theorem}\label{thm:thin-tree} For every fixed value $m$, the thinness of the complete $m$-ary tree on $n$ vertices is
$\Theta(\log n)$. \end{theorem}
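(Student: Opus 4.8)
The plan is to prove matching upper and lower bounds of order $\log n$ for the thinness of the complete $m$-ary tree $T$, where $m$ is a fixed constant. Since $T$ has $n=\Theta(m^h)$ vertices at height $h$, we have $h=\Theta(\log_m n)=\Theta(\log n)$, so it suffices to show that $\thin(T)=\Theta(h)$ where $h$ is the height of the tree.

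For the upper bound, I would exhibit an explicit consistent ordering and partition using $O(h)$ classes. A natural candidate is to order the vertices by a suitable traversal (for instance, so that each subtree occupies a contiguous block) and to assign to each vertex a class determined by its depth modulo some quantity, or more robustly by a ``heavy-path'' style decomposition. The cleanest route is probably to recall or reprove that the thinness of a graph is at most its pathwidth plus one (stated earlier from \cite{M-O-R-C-thinness}), and then bound the pathwidth of the complete $m$-ary tree by $O(h)=O(\log n)$. Indeed, trees of height $h$ admit path decompositions of width $O(h)$ via a recursive decomposition that keeps only the vertices on the current root-to-leaf path in the active bags, so $\pw(T)=O(\log n)$ and hence $\thin(T)\leq \pw(T)+1=O(\log n)$.

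For the lower bound, the plan is to invoke Lemma~\ref{lem:peak}, which gives $\thin(T)\geq b_v(T)/\Delta(T)$. Here $\Delta(T)=m+1$ is a constant, so it remains to show that the vertex isoperimetric peak $b_v(T)$ is $\Omega(\log n)=\Omega(h)$. To do this I would choose a good value of $s$ and argue that \emph{every} vertex set $X$ with $|X|=s$ has large boundary $|N(X)|$. A convenient choice is to take $s$ comparable to the number of vertices in the bottom few levels, or to use a separator-type argument: in a complete $m$-ary tree, any balanced-size set $X$ cannot be separated from its complement by too few vertices, because the tree has many ``parallel'' subtrees hanging off the root-to-leaf paths. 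The key quantitative claim is that to enclose a set of a given moderate size one must touch at least one vertex on each of $\Omega(h)$ distinct levels, forcing $|N(X)|=\Omega(h)$.

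The main obstacle will be the lower bound on $b_v(T)$: establishing a clean isoperimetric inequality showing that some cardinality $s$ forces boundary $\Omega(h)$ for all sets of that size. Unlike the grid, where isoperimetry is classical, for trees one must be careful because small sets (e.g. a single leaf) have tiny boundary, and large sets can also have small boundary, so the peak is attained at an intermediate scale. I would handle this by a careful counting argument over the levels: fix $s$ so that $X$ must, in any configuration, straddle many levels, and lower-bound the number of boundary vertices contributed per level. Once $b_v(T)=\Omega(\log n)$ is in hand, dividing by the constant $\Delta(T)=m+1$ finishes the lower bound, and combining with the pathwidth upper bound yields $\thin(T)=\Theta(\log n)$.
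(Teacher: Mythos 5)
Your overall architecture is exactly the paper's: the paper proves this theorem by combining $\thin(G)\leq\pw(G)+1$ from \cite{M-O-R-C-thinness} with the known fact \cite{E-S-T-pw-tree,Sch-pw-tree} that the pathwidth of the complete $m$-ary tree of height $h$ is $\Theta(h)$ (upper bound), and Lemma~\ref{lem:peak} with Vr\v{t}o's result \cite{Vrto-isop} that the vertex isoperimetric peak of this tree is $\Theta(h)$ (lower bound). Your upper bound is sound: the recursive decomposition that adds the root to every bag of the concatenated decompositions of the child subtrees gives $\pw(T)\leq h$, so $\thin(T)\leq h+1=O(\log n)$.

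The genuine gap is in your self-contained sketch of the lower bound $b_v(T)=\Omega(h)$. The mechanism you propose --- choose $s$ ``comparable to the number of vertices in the bottom few levels'' or ``balanced-size'' so that any $X$ with $|X|=s$ must straddle many levels, each contributing to the boundary --- fails. Take $X$ to be the complete subtree rooted at a child of the root: it has size about $n/m$ (as balanced as one could ask), it straddles all $h$ levels, yet $|N(X)|=1$. More generally, a union of $j$ complete subtrees whose roots are pairwise non-ancestral has boundary at most $j$ no matter how many levels it spans, so straddling levels contributes nothing per level. The actual proof (in \cite{Vrto-isop}, and in \cite{B-C-thinness} for binary trees) is number-theoretic rather than geometric: one shows that a set with boundary $b$ is, up to $O(b)$ vertices, a union of $O(b)$ complete subtrees, whose sizes all have the special form $(m^{k+1}-1)/(m-1)$, and then one picks a cardinality $s$ that is far from every sum of few such numbers; the peak is attained at such an adversarially chosen $s$, not at a ``natural'' scale. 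As written, your key quantitative claim is false, and repairing it essentially means reproving Vr\v{t}o's theorem. If instead you simply cite \cite{Vrto-isop} for $b_v(T)=\Theta(h)$ and divide by $\Delta(T)=m+1$ via Lemma~\ref{lem:peak}, your argument coincides with the paper's.
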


\begin{proof}
In~\cite{Vrto-isop} it was proved that the vertex isoperimetric peak of the complete $m$-ary tree of height $h$ is $\Theta(h)$.
On the other hand, it was proved in~\cite{E-S-T-pw-tree,Sch-pw-tree} that the pathwidth of the complete $m$-ary tree of height $h$ is $\Theta(h)$. As the thinness of
a graph is upper bounded by the pathwidth plus one~\cite{M-O-R-C-thinness} and using Lemma~\ref{lem:peak}, it follows that the thinness of the complete $m$-ary tree of height $h$ is $\Theta(h)$, and this proves the theorem.
\end{proof}

The \emph{cutwidth} of a graph $G$, denoted as $\cutw(G)$,
is the smallest integer $k$ such that the vertices of $G$ can be
arranged in a linear layout $v_1,\dots,v_n$ in such a way that for
every $i=1,\dots,n-1$, there are at most $k$ edges with one
endpoint in $\{v_1,\dots,v_i\}$ and the other in $\{v_{i+1},
\ldots, v_n\}$.

\begin{theorem}\label{thm:thin-cutw} For every graph $G$, $\thin(G) \leq \cutw(G)+1$.
Moreover, a linear layout realizing the cutwidth leads to a
consistent partition into at most $\cutw(G)+1$ classes.
\end{theorem}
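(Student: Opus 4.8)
The plan is to take the ordering ${<}$ to be a linear layout $v_1,\dots,v_n$ that realizes the cutwidth, set $k=\cutw(G)$, and reuse the construction of Theorem~\ref{thm:thin-comp-order}. That theorem tells us that the minimum number of classes in a partition consistent with ${<}$ equals the chromatic number of the co-comparability graph $G_<$, and that the color classes of an optimal coloring of $G_<$ yield such a partition. Hence it suffices to show $\chi(G_<)\le k+1$; this simultaneously gives the bound $\thin(G)\le\cutw(G)+1$ and the \emph{moreover} statement, since the coloring of $G_<$ translates back into a consistent partition with at most $k+1$ classes. Because $G_<$ is a co-comparability graph it is perfect, so $\chi(G_<)=\omega(G_<)$, and it is enough to bound the size of a largest clique of $G_<$.

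The heart of the argument is thus to show $\omega(G_<)\le k+1$. Let $v_{r_1},\dots,v_{r_m}$, with $r_1<\dots<r_m$, be the vertices of a clique of $G_<$, and assume $m\ge 2$ (the bound being trivial otherwise, as $\cutw(G)\ge 0$). For each $i<m$, the pair $v_{r_i}v_{r_m}$ is an edge of $G_<$, so by the definition of $G_<$ there is a vertex $z_i$ with $r_m<z_i$, $z_i v_{r_i}\in E(G)$, and $z_i v_{r_m}\notin E(G)$. Consider the edges $v_{r_i}z_i$ of $G$ for $i=1,\dots,m-1$. Each has one endpoint $v_{r_i}$ in $\{v_1,\dots,v_{r_m}\}$ and the other endpoint $z_i$ in $\{v_{r_m+1},\dots,v_n\}$, so each crosses the cut of the layout between positions $r_m$ and $r_m+1$.

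These $m-1$ edges are pairwise distinct: if $\{r_i,z_i\}=\{r_j,z_j\}$, then, since $r_i,r_j\le r_m<z_i,z_j$, the smaller endpoints must coincide, forcing $r_i=r_j$ and hence $i=j$. Therefore at least $m-1$ edges cross this cut, whence $m-1\le\cutw(G)$ and $\omega(G_<)\le\cutw(G)+1$. Combining the steps gives $\thin(G)\le\chi(G_<)=\omega(G_<)\le\cutw(G)+1$.

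The step I expect to require the most care is the clique bound: one must correctly identify the relevant cut, namely the one immediately after the largest vertex $v_{r_m}$ of the clique, and then verify that the witnesses $z_i$ produce genuinely distinct crossing edges, which is exactly the distinctness check above. Everything else follows directly from Theorem~\ref{thm:thin-comp-order} together with the perfection of co-comparability graphs.
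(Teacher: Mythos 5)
Your proposal is correct and takes essentially the same approach as the paper: both pass to the auxiliary graph $G_<$ of Theorem~\ref{thm:thin-comp-order}, use that $\chi(G_<)=\omega(G_<)$ for co-comparability graphs, and bound cliques of $G_<$ by counting the witness edges crossing the cut just after the clique's largest vertex. The only cosmetic difference is that the paper argues by contradiction from a clique of size $\cutw(G)+2$, whereas you bound an arbitrary clique directly and make explicit the distinctness of the crossing edges, which the paper leaves implicit.
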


\begin{proof}
Let $G$ be a graph of cutwidth $k$, and let $v_1,\dots,v_n$ such
that for every $i=1,\dots,n-1$, there are at most $k$ edges with
one endpoint in $\{v_1,\dots,v_i\}$ and the other in $\{v_{i+1},
\ldots, v_n\}$. Let $G_<$ be the graph defined as in Theorem
\ref{thm:thin-comp-order} for the order $v_1,\dots,v_n$. Since
$G_<$ is a co-comparability graph, its chromatic number equals the
size of a maximum clique of it~\cite{Meyn-co-comp}. Suppose that
$G_<$ has a clique $H$ of size $k+2$, and let $v_i$ be the vertex
of higher index in $H$. By definition of $G_<$, for each $i' < i$
such that $v_{i'} \in H$, there exists $j
> i$ such that $v_j$ is adjacent to $v_{i'}$ and not adjacent to
$v_i$. So, there are at least $k+1$ edges with one endpoint in
$\{v_1,\dots,v_i\}$ and the other in $\{v_{i+1}, \ldots, v_n\}$, a
contradiction.
\end{proof}

The gap may be high, as for example on cliques.

The \emph{linear MIM-width} of a graph $G$, denoted as
$\lmimw(G)$, is the smallest integer $k$ such that the
vertices of $G$ can be arranged in a linear layout $v_1,\dots,v_n$
in such a way that for every $i=1,\dots,n-1$, the size of a
maximum induced matching in the bipartite graph formed by the
edges of $G$ with an endpoint in $\{v_1,\dots,v_i\}$ and the other
one in $\{v_{i+1}, \ldots, v_n\}$ is at most $k$. This is the
linear version of a parameter called
MIM-width~\cite{VatshelleThesis}, that is a lower bound for the
linear MIM-width.

\begin{theorem}\label{thm:thin-mimw} For every graph $G$, $\lmimw(G) \leq \thin(G)$. Moreover, a linear ordering $v_1, \dots, v_n$ realizing the
thinness, satisfies that the size of a maximum induced matching in
the bipartite graph formed by the edges of $G$ with an endpoint in
$\{v_1,\dots,v_i\}$ and the other one in $\{v_{i+1}, \ldots,
v_n\}$ is at most $\thin(G)$. \end{theorem}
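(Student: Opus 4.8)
The plan is to fix an ordering $v_1,\dots,v_n$ that realizes the thinness, take the associated partition $(V^1,\dots,V^k)$ with $k=\thin(G)$ that is consistent with this ordering, and show directly that for every cut index $i$, any induced matching in the bipartite ``cut graph'' between $\{v_1,\dots,v_i\}$ and $\{v_{i+1},\dots,v_n\}$ has size at most $k$. Since $\lmimw(G)$ is the minimum over all layouts of the maximum cut induced-matching size, exhibiting one layout (the thinness ordering) with all cut matchings bounded by $k$ immediately yields $\lmimw(G)\le\thin(G)$, and simultaneously establishes the ``moreover'' claim.

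First I would set up the contradiction: fix $i$ and suppose the cut graph at position $i$ contains an induced matching $M$ of size $k+1$, say with edges $a_\ell b_\ell$ where $a_\ell\in\{v_1,\dots,v_i\}$ (the ``low'' side) and $b_\ell\in\{v_{i+1},\dots,v_n\}$ (the ``high'' side), for $\ell=1,\dots,k+1$. The key structural observation is that the low endpoints $a_1,\dots,a_{k+1}$ cannot all lie in distinct classes, because there are only $k$ classes; by pigeonhole, two of the low endpoints, say $a_p$ and $a_q$ with $a_p<a_q$ in the ordering, belong to the same class $V^j$.

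Now I would invoke consistency to derive the contradiction. Consider the triple $a_p<a_q<b_p$: the vertices $a_p$ and $a_q$ are in the same class $V^j$, and $b_p$ is adjacent to $a_p$ (it is the matched partner), so consistency forces $b_p$ to be adjacent to $a_q$ as well. But $a_q b_p$ is precisely an edge of the cut graph joining two distinct matching edges of $M$ (the low endpoint of one edge to the high endpoint of another), contradicting the assumption that $M$ is an \emph{induced} matching. The only subtlety to check is that $a_q < b_p$ holds: this is immediate because $a_q$ is on the low side ($a_q\le v_i$) and $b_p$ is on the high side ($b_p\ge v_{i+1}$), so $a_q<b_p$ in the ordering automatically.

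The main obstacle, such as it is, lies in handling the degenerate cases of the pigeonhole argument and confirming that the edge produced by consistency is genuinely a ``crossing'' edge of the induced matching rather than one of the original matching edges themselves. Since $a_q b_p$ connects the low endpoint of the $q$-th matching edge to the high endpoint of the distinct $p$-th matching edge (with $p\neq q$), it cannot coincide with any edge of $M$, so its presence indeed violates inducedness. I would also note that the symmetric pigeonhole on the high side is unnecessary: bounding via the low endpoints suffices, and the argument is uniform over all cut positions $i$, which is exactly what is needed to conclude that the fixed thinness ordering witnesses $\lmimw(G)\le\thin(G)$.
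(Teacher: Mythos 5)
Your proof is correct and follows essentially the same route as the paper's: fix the thinness ordering with its consistent $k$-class partition, pigeonhole two low endpoints of a hypothetical $(k+1)$-edge induced matching into one class, and apply consistency to the triple $a_p<a_q<b_p$ to produce the crossing edge $a_q b_p$ contradicting inducedness. The paper states the same argument contrapositively (any two matching edges with same-class low endpoints force a crossing edge, so all low endpoints lie in distinct classes), but the content is identical.
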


\begin{proof}
Let $k=\thin(G)$ and consider a $k$-thin representation of $G$,
with ordering $<$ of $V(G)$, namely $v_1 < \dots < v_n$, and a
partition of $V(G)$ into $k$ classes. Let $1 \leq i \leq n-1$ and
let $M$ be a maximum induced matching in the bipartite graph
formed by the edges of $G$ with an endpoint in $\{v_1,\dots,v_i\}$
and the other one in $\{v_{i+1}, \ldots, v_n\}$. Suppose $v_rv_t$
and $v_sv_q$ belong to $M$, with $r < s \leq i$, $t, q \geq i+1$.
If $v_r$ and $v_s$ belong to the same class of the partition, by
definition of $k$-thin representation, $v_sv_t$ is also an edge, a
contradiction with the fact that $M$ is an induced matching. So,
$|M| \leq k$, thus $\lmimw(G) \leq \thin(G)$.
\end{proof}

As a corollary, given a graph $G$ provided with a $k$-thin
representation, a wide family of problems known as Locally
Checkable Vertex Subset and Vertex Partitioning Problems (LC-VSVP
Problems) can be solved in $n^{(O(k))}$ time~\cite{VatshelleThesis},
as this holds for MIM-width $k$ and a suitable ordering. This family of
problems is not comparable (inclusion-wise) with the one in
Section~\ref{sec:algo-thin-comb}, but encompasses maximum weighted
independent set and minimum weighted dominating set.

\section{Interval graphs with high proper thinness}\label{sec:interv}

In this section we first show that proper thinness of the class of
interval graphs is unbounded. Then we relate the proper thinness
of interval graphs to other interval graphs invariants, like
interval count. A family of interval graphs with arbitrarily large
proper thinness is the following.

Let $h \geq 1$, and define \emph{claw$_h$} as the graph obtained
from the complete ternary tree of height $h$ by adding all the
edges between a vertex of the tree and its ancestors. It is easy
to see that $\clawh$ is an interval graph for every $h \geq 1$ (an
interval representation of claw$_3$ can be seen in
Figure~\ref{fig:clawh}). The graph claw$_1$ is the claw.

\begin{proposition}\emph{\cite{Saban2010-pc}}\label{prop:clawh}
For any $h \geq 1$, $\pthin(\clawh)=h+1$.
\end{proposition}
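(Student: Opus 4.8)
The plan is to establish $\pthin(\clawh) = h+1$ by proving two matching inequalities. For the lower bound $\pthin(\clawh) \geq h+1$, I would exhibit a suitable induced subgraph or isoperimetric obstruction. A clean approach is to observe that $\clawh$ contains a structure forcing many classes in any strongly consistent ordering. Notice that for the leaves of the ternary tree together with their ancestor-chains, the graph contains $\overline{tK_2}$-like substructure, or more directly, one can track how the three children of each node together with the ancestor edges create conflicts. Concretely, I would identify $h+1$ vertices forming a path from root to leaf; the key is that in a strongly consistent ordering, both the order and its reverse must be consistent, which severely constrains where the three subtrees at each level can sit relative to one another. I expect the cleanest lower bound comes from showing $\clawh$ contains an induced subgraph whose proper thinness is already known, or by a direct counting argument: at each level of the tree, the ternary branching combined with the requirement that near and far neighborhoods both be intervals forces at least one new class per level.

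For the upper bound $\pthin(\clawh) \leq h+1$, the plan is to construct an explicit strongly consistent ordering together with a partition into $h+1$ classes. Since $\clawh$ is an interval graph (as stated), I would start from a natural interval-like ordering of its vertices. The most promising construction is to order vertices by a traversal of the ternary tree that respects the interval representation suggested in Figure~\ref{fig:clawh}, and to assign each vertex to a class according to its \emph{depth} in the tree (giving $h+1$ classes, for depths $0$ through $h$). I would then verify that this depth-based partition is strongly consistent with the chosen ordering: for the forward direction, if $v_r, v_s$ are at the same depth and $v_t v_r$ is an edge with $r<s<t$, I must check $v_t v_s$ is an edge; and symmetrically for the reverse ordering. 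Because all edges in $\clawh$ run between a vertex and its ancestors, same-depth vertices are pairwise nonadjacent and their neighborhoods are tightly controlled by the tree structure, which should make these verifications systematic.

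The main obstacle I anticipate is the lower bound: proving that $h$ classes genuinely do not suffice for \emph{any} ordering, not merely the natural one. The reversibility condition in the definition of proper thinness (both $v_1,\dots,v_n$ and $v_n,\dots,v_1$ consistent) is what makes proper thinness larger than ordinary thinness here, and capturing exactly how this doubling forces an extra class per level is delicate. A robust strategy would be to argue by induction on $h$: assuming the bound for height $h-1$, show that any strongly consistent representation of $\clawh$ restricted to an appropriate rooted subtree (together with the added ancestor edges) induces a strongly consistent representation of $\mathrm{claw}_{h-1}$, while the root's three subtrees cannot all reuse the same class without violating either the forward or the reverse consistency condition. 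Identifying the precise three-vertex or six-vertex gadget that blocks class-reuse across sibling subtrees is the crux, and I would lean on a lemma in the spirit of Lemma~\ref{lem:order-match} to pin down the forced relative orderings of the symmetric branches.
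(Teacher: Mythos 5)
Your upper bound coincides with the paper's: the paper also partitions the vertices by depth into $h+1$ classes and orders them by a postorder traversal of the tree, and the verification is routine for the reasons you give. The genuine gap is the lower bound, which you correctly flag as the crux but leave unresolved, and each of the concrete routes you sketch would fail. First, there is no $\overline{tK_2}$-like substructure to find: $\clawh$ is an interval graph, so it contains no induced $C_4=\overline{2K_2}$, and indeed $\thin(\clawh)=1$; consequently any lower-bound technique aimed at ordinary thinness (including the isoperimetric bound of Lemma~\ref{lem:peak}, which bounds $\thin$, not $\pthin$) can never certify more than $1$ here. Second, your proposed gadget ``blocking class-reuse across sibling subtrees'' cannot exist, because it is false that sibling subtrees need distinct classes: the depth-based partition from your own upper bound reuses every class across all three subtrees of every vertex. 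The obstruction therefore cannot be transverse (across siblings); it must run along a root-to-leaf path, forcing ancestor and descendant into different classes.

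That is exactly what the paper proves, directly and without induction. Relabel the tree so that at each level the path vertex $v^i_1$ is the \emph{median} of its three siblings in the given ordering, i.e.\ $v^i_2 < v^i_1 < v^i_3$; the ternary branching is what guarantees a sibling on each side simultaneously, which is needed because the relative order of an ancestor--descendant pair is unknown. Then for $i' < i$, the vertices $v^{i'}_1$ and $v^i_1$ cannot share a class: if $v^i_1 < v^{i'}_1$, the triple $v^i_2 < v^i_1 < v^{i'}_1$ with $v^i_2 v^{i'}_1 \in E$ (ancestor edge) and $v^i_2 v^i_1 \notin E$ (siblings) violates the reverse consistency condition, while if $v^{i'}_1 < v^i_1$, the triple $v^{i'}_1 < v^i_1 < v^i_3$ with $v^i_3 v^{i'}_1 \in E$ and $v^i_3 v^i_1 \notin E$ violates the forward one. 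Hence the $h+1$ vertices $v^0_1,\dots,v^h_1$ lie in pairwise distinct classes under \emph{any} strongly consistent ordering and partition. Note that even your inductive plan would ultimately need this same pairwise ancestor--descendant separation (to show the root's class is unavailable inside a $\mathrm{claw}_{h-1}$ subgraph), so the median-sibling argument, not a sibling-subtree gadget and not Lemma~\ref{lem:order-match}, is the missing idea.
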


\begin{proof}
Let $h \geq 1$. We will label the vertices of $G = \clawh$ as
$v^i_j$ such that $0 \leq i \leq h$, $1 \leq j \leq 3^i$, $v^0_1$
is the root of the ternary tree, and for each $0 \leq i \leq h-1$,
$1 \leq j \leq 3^i$, the children of $v^i_j$ are $v^{i+1}_{3j-2}$,
$v^{i+1}_{3j-1}$, and $v^{i+1}_{3j}$. Let us consider an ordering
$<$ and a partition of $V(G)$ that are strongly consistent.
Without loss of generality, by symmetry, we may assume $v^i_2 <
v^i_1 < v^i_3$ for every $i \geq 1$.

Let us show now that for every $0 \leq i' < i \leq h$, $v^i_1$ and
$v^{i'}_1$ cannot be in the same class of the partition.
Otherwise, if $v^i_1 < v^{i'}_1$ then the fact of $v^i_2 < v^i_1 <
v^{i'}_1$, $v^i_2 v^{i'}_1 \in E(G)$ and $v^i_2 v^i_1  \not \in
V(G)$ contradicts the definition of strong consistency, and if
$v^{i'}_1 < v^i_1$ then the fact of $v^{i'}_1 < v^i_1 < v^i_3$,
$v^i_3 v^{i'}_1 \in E(G)$ and $v^i_3 v^i_1 \not \in V(G)$
contradicts the definition of strong consistency.

So, $v^0_1, \dots, v^h_1$ are all in different classes of the
partition and $\pthin(\clawh)\geq h+1$. On the other hand, a
partition of the vertices according to its height in the tree, and
a postorder of the vertices of the tree are strongly consistent.
Thus $\pthin(\clawh)= h+1$.
\end{proof}

\begin{figure}
\begin{center}
\includegraphics[width=.7\textwidth]{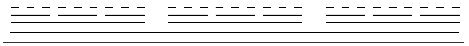}
\caption{An interval representation of claw$_3$.}\label{fig:clawh}
\end{center}
\end{figure}

This example is also a classical example of a graph with high
interval count and high length of a chain of nested intervals. We
will relate the proper thinness of interval graphs to these two
interval graphs invariants.

The \emph{interval count} of an interval graph $G$ is the minimum
number of different interval sizes needed in an interval
representation of $G$ (see for example
\cite{C-O-S-count,L-A-P-count}). Graphs with interval count at
most $k$ are also known as \emph{$k$-length interval graphs}.

A \emph{$k$-nested interval graph} is an interval graph admitting
an interval representation in which there are no chains of $k + 1$
intervals nested in each other~\cite{K-O-S-nested}. It is easy to
see that $k$-nested interval graphs are a superclass of $k$-length
interval graphs. We have also the following property.

\begin{proposition}\emph{\cite{Nisse2010-pc}}
Every $k$-nested interval graph is proper $k$-thin.
\end{proposition}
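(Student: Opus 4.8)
The plan is to work directly from an interval representation $\{I_v = [l_v, r_v]\}_{v \in V(G)}$ witnessing that $G$ is $k$-nested, that is, one containing no chain $I_{u_1} \subsetneq I_{u_2} \subsetneq \cdots \subsetneq I_{u_{k+1}}$ of $k+1$ properly nested intervals. After a routine perturbation I would assume that all $2n$ endpoints are distinct (see the last paragraph for the one delicate point here). I would then order the vertices by left endpoint, say $l_{v_1} < l_{v_2} < \cdots < l_{v_n}$, and partition them by \emph{nesting depth}: letting $\prec$ denote proper containment of intervals, define $\ell(v)$ to be the length of the longest chain of $\prec$ whose top element is $v$ (equivalently $\ell(v) = 1 + \max\{\ell(u) : I_u \subsetneq I_v\}$, with $\ell(v)=1$ when $I_v$ is $\prec$-minimal). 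Since there is no chain of $k+1$ nested intervals, $\ell$ takes values in $\{1, \dots, k\}$, so the classes $V^j = \{v : \ell(v) = j\}$ form a partition into at most $k$ parts. The claim is that this ordering and partition are strongly consistent.

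The key structural fact is that no two intervals in the same class are nested: if $I_u \subsetneq I_v$ then every chain ending at $u$ extends to one ending at $v$, so $\ell(u) < \ell(v)$ and $u, v$ lie in different classes. Consequently, within a single class the left-endpoint and right-endpoint orders coincide: for $u, v$ in the same class with $l_u < l_v$, non-nestedness together with distinct endpoints forces $r_u < r_v$ (otherwise $I_v \subsetneq I_u$). With this in hand I would verify the two conditions of strong consistency for a triple $r < s < t$, that is $l_{v_r} < l_{v_s} < l_{v_t}$. For the forward condition, suppose $v_r, v_s$ are in the same class and $v_t v_r \in E$; since $l_{v_r} < l_{v_t}$ this means $l_{v_t} \le r_{v_r}$, and the within-class inequality $r_{v_r} < r_{v_s}$ then gives $l_{v_t} \le r_{v_r} < r_{v_s}$, so $I_{v_t} \cap I_{v_s} \neq \emptyset$ and $v_t v_s \in E$. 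For the backward condition, suppose $v_s, v_t$ are in the same class and $v_r v_t \in E$; then $l_{v_t} \le r_{v_r}$, whence $l_{v_s} < l_{v_t} \le r_{v_r}$ gives $I_{v_r} \cap I_{v_s} \neq \emptyset$ and $v_r v_s \in E$. Notice that the backward condition is in fact forced by the interval-graph ordering property alone and does not even use the class hypothesis; only the forward condition needs the partition. This establishes strong consistency, and hence $\pthin(G) \le k$.

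The one step that needs genuine care — and which I expect to be the main obstacle — is the reduction to distinct endpoints. A careless perturbation is dangerous precisely because of coincident endpoints: shrinking or uniformly expanding a group of \emph{identical} intervals turns them into a properly nested chain, which could manufacture a chain of length far exceeding $k$ and destroy the hypothesis. The fix is to perturb any maximal group of intervals sharing endpoints into a \emph{crossing} (staircase) configuration rather than a nested one — e.g.\ replacing identical copies of $[l,r]$ by slightly left-shifted intervals $[l - i\delta, r - i\delta]$ with $\delta$ small — so that the perturbed intervals are pairwise overlapping, pairwise non-nested, distinct-ended, and (for $\delta$ small enough) have unchanged intersection and containment relations with all other intervals. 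One must verify that carrying out all such perturbations simultaneously preserves $G$ and does not lengthen any chain of nested intervals; this is routine but is the place where the argument has to be written carefully. As a sanity check, the construction specializes correctly to Proposition~\ref{prop:clawh}: ordering $\clawh$ by left endpoints and partitioning by tree-height is exactly a nesting-depth partition.
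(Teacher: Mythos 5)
Your proof is correct and is essentially the paper's own argument in mirror image: the paper uses the same nesting-depth partition (length of the longest chain of nested intervals ending at each interval) but orders vertices by \emph{right} endpoints, so that the forward consistency condition holds automatically for any partition and only the backward condition needs the non-nesting of same-class intervals, exactly dual to your left-endpoint version. Your extra care about the reduction to distinct endpoints (perturbing coincident intervals into a crossing rather than nested configuration) is sound and fills in a step the paper dispatches by citing it as a known fact.
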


\begin{proof}
Let $G$ be a $k$-nested interval graph and consider an interval
representation of $G$ with no chains of $k + 1$ intervals nested
in each other. It is a known result that we may assume that all
the interval endpoints are distinct. We label each interval by the
length of the longest chain of nested intervals ending in it, and
these labels define the partition of the vertices into classes,
that are at most $k$. Now, we order the vertices according to
their intervals by the right endpoint (left to right). That order
is consistent with the partition in which the only class contains
all vertices of $G$, so, in particular, it is consistent with
every other partition refining it. Let us see that the consistency
is strong. Let $r < s < t$ such that $s$ and $t$ are in the same
class of the partition. Let $I_r, I_s, I_t$ their corresponding
intervals. By definition of the classes, $I_s \not \subseteq I_t$,
otherwise the length of the longest chain of nested intervals
ending in $I_s$ would be strictly greater than the one for $I_t$.
As the right endpoint of $I_t$ is greater than the one of $I_s$,
it follows that the left endpoint of $I_t$ is also greater than
the one of $I_s$. Thus, if $I_r$ intersects $I_t$, it intersects
$I_s$ as well. So, the ordering and the partition are strongly
consistent and $G$ is proper $k$-thin.
\end{proof}

Graphs with interval count one are known as unit interval graphs,
while $1$-nested interval graph are equivalent to proper interval
graphs. In~\cite{Rob-uig} it is shown that unit interval graphs
are equivalent to proper interval graphs. So the classes proper
$1$-thin, $1$-length interval and $1$-nested interval are
equivalent. We will see that for higher numbers the equivalence
does not necessarily hold.

Indeed, in~\cite[Theorem 5, p. 177]{Fishburn-book}, Fishburn shows
that, for every $k \geq 2$, there are $2$-nested interval graphs
that are not $k$-length interval.

We will describe a family of graphs that show that, for every $k
\geq 3$, there are proper $3$-thin graphs that are not $k$-nested
interval.

Let $k \geq 1$. Let $G_k$ with $3k+1$ vertices is defined as
follows. Its vertex-set is $V_k = A_k  \cup B_k \cup W_k$, where
$A_k = \{a_1, \dots, a_k\}$, $B_k = \{b_1, \dots, b_k\}$ and $W_k
= \{v_1, \dots, v_k, v_{k+1}\}$. The subgraph induced by $W_k$ is
a clique with $k+1$ vertices; $a_1$ (resp., $b_1$) is adjacent to
$v_1$. Then, for any $1 < i \leq k$, $a_i$ (resp., $b_i$) is
adjacent to $a_{i-1}$ (resp., to $b_{i-1}$), and to $v_j$ for any
$j \geq i$. See Figure~\ref{fig:Nisse} for a sketch of $G_k$ and
an interval representation of it.

The graph $G_1$ is the claw, which is not proper interval. For
higher values of $k$, we have the following property.

\begin{figure}
\begin{center}
\includegraphics[width=.9\textwidth]{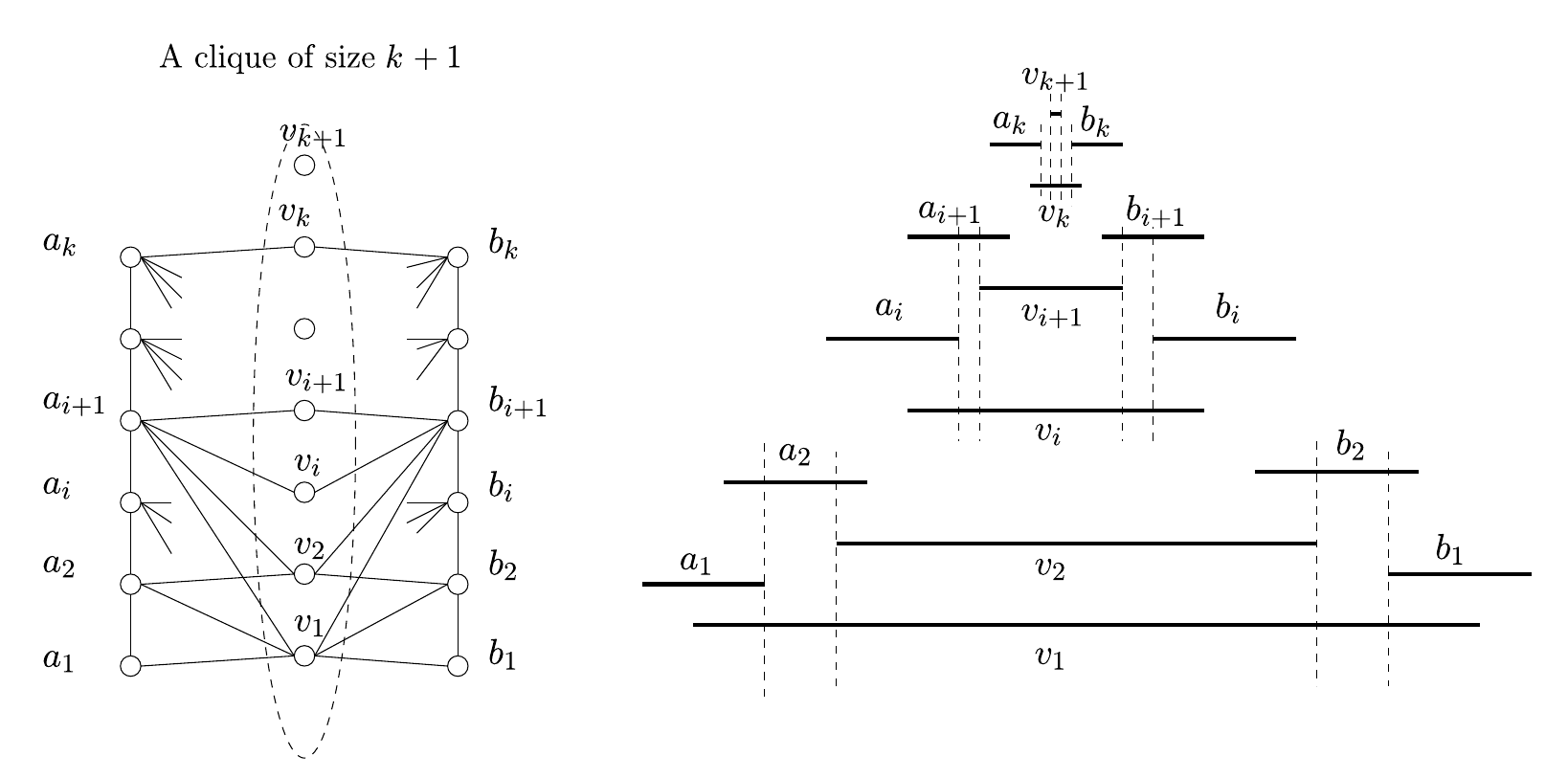}
\caption{A sketch of graph $G_k$ and an interval representation of
it.}\label{fig:Nisse}
\end{center}
\end{figure}

\begin{proposition}\emph{\cite{Nisse2010-pc}}
For any $k \leq 2$, $G_k$ is proper $3$-thin, but in every
interval representation of it, if $I_j$ is the interval
corresponding to $v_j$, it holds $I_{k+1} \subseteq I_k \subseteq
\dots \subseteq I_1$.
\end{proposition}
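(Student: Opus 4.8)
The plan is to prove the two assertions separately; I read the hypothesis as $k\ge 2$ (the case $k=1$ is the claw, treated just before). The only features of $G_k$ I will use are: $W_k$ induces a clique; $N(a_i)\cap W_k=N(b_i)\cap W_k=\{v_1,\dots,v_i\}$, so in particular each of $a_i,b_i$ is adjacent to $v_i$ but not to $v_{i+1}$ while all of them are adjacent to $v_1$; the sets $A_k$ and $B_k$ each induce a path $a_1a_2\cdots a_k$ and $b_1b_2\cdots b_k$; and there are no edges between $A_k$ and $B_k$. For the nesting statement, fix an interval representation and assume, as we may, that all endpoints are distinct; write $I_j=[l_j,r_j]$. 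Since $I_1,\dots,I_{k+1}$ pairwise intersect, the Helly property gives a common point $p\in\bigcap_j I_j$. The first step pushes each $a_i,b_i$ off $p$: as $a_i\not\sim v_{i+1}$ while $p\in I_{i+1}$, the interval $I_{a_i}$ lies entirely on one side of $p$, and likewise $I_{b_i}$. Because $a_1\cdots a_k$ is a path its consecutive intervals overlap, so all of $A_k$ lies on one side of $p$; reflecting the line if necessary, assume $A_k$ lies to the left of $p$, and similarly all of $B_k$ lies on a single side.

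The crux is that $A_k$ and $B_k$ cannot lie on the same side. Using $a_1\sim a_2$, $a_1\not\sim v_2$ and $a_2\sim v_2$, one sees that $I_{a_1}\subseteq(-\infty,l_2)$ while $I_{a_2}$ reaches past $l_2$, so their overlap forces $l_2\in I_{a_2}$; the identical argument on the path $b_1b_2$ would give $l_2\in I_{b_2}$ were $B_k$ also on the left, whence $I_{a_2}\cap I_{b_2}\neq\emptyset$, contradicting the absence of $A_k$–$B_k$ edges. Hence $B_k$ lies to the right of $p$. Now for each $i\le k$ the pair $a_i\sim v_i$, $a_i\not\sim v_{i+1}$ with $I_{a_i}$ to the left of $p$ gives $l_i<l_{i+1}$, so $l_1<\dots<l_{k+1}$; by the mirror-image argument on the right, $B_k$ yields $r_1>\dots>r_{k+1}$. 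Combining, $I_{j+1}=[l_{j+1},r_{j+1}]\subseteq[l_j,r_j]=I_j$ for every $j$, that is, $I_{k+1}\subseteq\dots\subseteq I_1$.

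For $\pthin(G_k)\le 3$ I would exhibit one strongly consistent pair explicitly: take the three classes $A_k,B_k,W_k$ and the order
\[
v_{k+1},v_k,\dots,v_1,\;a_k,a_{k-1},\dots,a_1,\;b_k,b_{k-1},\dots,b_1,
\]
i.e. the clique block first (with the vertices of largest $A_k\cup B_k$-neighbourhood placed last), then $A_k$, then $B_k$, each block in decreasing index. Strong consistency means this order and its reverse are both consistent with the partition, which I would verify by a short case analysis on the classes of a triple $(r,s,t)$. The backward condition for a same-class pair in $W_k$ is vacuous since $W_k$ forms the first block, and the forward condition holds precisely because $W_k$ is listed in decreasing index and each neighbourhood $\{v_1,\dots,v_i\}$ is a suffix of that block; for a same-class pair inside $A_k$ or $B_k$ the only possible third vertex adjacent to one of them is a clique vertex, and the decreasing-index listing makes the implication hold in both directions.

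I expect the main obstacle to be the "opposite sides" step of the nesting proof: it is exactly what forbids a representation in which the $I_j$ fail to be nested, and it is the only place where both the path structure of $A_k,B_k$ and the absence of $A_k$–$B_k$ edges are essential. A secondary subtlety, on the $\pthin\le3$ side, is that one must deliberately avoid the interval (right-endpoint) order: that order is strongly consistent only with classes free of nested intervals, which here would cost $k+1$ classes because of the forced chain $I_{k+1}\subseteq\dots\subseteq I_1$; the three-class bound is obtained by ordering the clique against its own nesting and spending two further classes on the two paths.
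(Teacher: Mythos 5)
Your proof is correct, and you rightly made two silent corrections: the statement's ``$k\le 2$'' must be read as $k\ge 2$, and the adjacency rule must be $N(a_i)\cap W_k=\{v_1,\dots,v_i\}$, which is what the paper's own proof and figure actually use (the textual definition's ``$v_j$ for any $j\ge i$'' is a typo). For the bound $\pthin(G_k)\le 3$ your certificate is the paper's in disguise: the paper takes the classes $A_k,B_k,W_k$ with the order $a_1,\dots,a_k,b_1,\dots,b_k,v_1,\dots,v_k,v_{k+1}$ and asserts strong consistency without proof, and your order is exactly the reverse of that one after swapping the symmetric roles of $A_k$ and $B_k$; since strong consistency is by definition invariant under reversing the order, you are verifying the same witness, merely supplying the case analysis the paper labels ``easy to see'' (one clause of your sketch is loose: for a same-class pair inside $A_k$ the \emph{forward} condition is vacuous because the later third vertex can never be a clique vertex---the clique block comes first---and never a neighbor otherwise; clique vertices matter only in the backward direction, where the suffix structure of $N(a_m)\cap W_k$ does the work, as you say). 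The genuine divergence is in the nesting half. The paper argues locally: for each $1\le i\le k-1$ the vertices $a_i,a_{i+1},v_{i+1},b_{i+1},b_i$ induce a path on five vertices, all adjacent to $v_i$, so $I_{i+1}$ is trapped between the disjoint intervals of $a_i$ and $b_i$ inside $I_i$, yielding $I_{i+1}\subseteq I_i$; the final containment $I_{k+1}\subseteq I_k$ then needs a separate endpoint argument (the intervals of $a_k$ and $b_k$ each contain an endpoint of $I_k$, while $v_{k+1}$ is adjacent to neither). You argue globally from a Helly point $p$ of the clique intervals: each of $A_k$ and $B_k$ is confined to one side of $p$ by path-connectedness, your $l_2\in I_{a_2}\cap I_{b_2}$ contradiction forces them onto opposite sides, and then $l_1<\dots<l_{k+1}$ and $r_1>\dots>r_{k+1}$ follow, giving all $k$ containments---including the last---uniformly, and in fact the strictly stronger conclusion that the nesting is proper at both endpoints. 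What each approach buys: yours is a single sweep with no special case and makes the endpoint monotonicity explicit, at the price of invoking the Helly property and the distinct-endpoints normalization; the paper's stays entirely at the level of induced $P_5$'s and needs neither. Both consume the hypothesis $k\ge 2$ in an essential way (your opposite-sides step; the paper's existence of at least one induced $P_5$), as they must, since for $k=1$ (the claw) the nesting conclusion is false.
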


\begin{proof}
Consider the ordering $a_1, \dots, a_k, b_1, \dots, b_k, v_1,
\dots, v_k, v_{k+1}$, and the three classes $A_k$, $B_k$ and
$W_k$. It is easy to see that they are strongly consistent.

Let $1 \leq i \leq k-1$. Notice that $a_i a_{i+1} v_{i+1} b_{i+1}
b_i$ induce a path of length five on $G_k$. In every interval
representation of it, the interval $I_{i+1}$ is between the
intervals corresponding to $a_i$ and $b_i$ and disjoint to them.
As the five vertices are adjacent to $v_i$, it follows that the
$I_{i+1} \subseteq I_{i}$. Finally, by the shape of interval
representations of a path of length five, each of the intervals
corresponding to $a_k$ and $b_k$ contains an endpoint of $I_k$. As
$v_{k+1}$ is neither adjacent to $a_k$ nor to $b_k$, $I_{k+1}
\subseteq I_k$.
\end{proof}

The following characterization was proved for $k$-nested interval
graphs.

\begin{lemma}\emph{\cite{K-O-S-nested}} An interval graph is $k$-nested interval if and only if it
has an interval representation which can be partitioned into $k$
proper interval representations.
\end{lemma}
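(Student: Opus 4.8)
The plan is to recognise this as Mirsky's theorem applied to the containment order on a family of intervals, the essential observation being that a single interval representation can witness both sides at once, so nothing needs to be transferred between different representations.

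Concretely, I would fix an arbitrary interval representation $\mathcal{R}$ of $G$ and establish the local equivalence: $\mathcal{R}$ has no chain of $k+1$ intervals nested in each other if and only if its intervals can be partitioned into at most $k$ classes, each of which contains no two properly nested intervals (equivalently, each class is a proper interval representation of the induced subgraph on its vertices). The lemma is immediate from this, since $G$ is $k$-nested interval exactly when some representation satisfies the left-hand condition, and $G$ admits a representation partitionable into $k$ proper interval representations exactly when some representation satisfies the right-hand condition; as the two conditions are equivalent for every $\mathcal{R}$, the two existential statements coincide.

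For the local equivalence I would use the strict containment order $\subsetneq$ on the intervals of $\mathcal{R}$: its chains are precisely the chains of nested intervals, and its antichains are precisely the subfamilies free of proper containment, that is, the proper interval representations. For the forward direction I would assign to each interval $I$ the value $d(I)$ equal to the maximum number of intervals in a chain $I = K_0 \subsetneq K_1 \subsetneq \dots \subsetneq K_r$ of $\mathcal{R}$ having $I$ as its smallest element, and place $I$ in the class indexed by $d(I)$. If two intervals $I \subsetneq I'$ were placed in the same class, prepending $I$ to a longest chain with smallest element $I'$ would yield $d(I) \geq d(I') + 1$, contradicting $d(I) = d(I')$; hence each class is an antichain, and the absence of a chain of $k+1$ nested intervals forces $d(I) \leq k$, so there are at most $k$ classes. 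This is exactly the nesting-depth labelling already used in the proof that every $k$-nested interval graph is proper $k$-thin. For the reverse direction, any chain of nested intervals meets each antichain-class at most once, hence has at most $k$ elements, so no chain of $k+1$ can exist.

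I expect no serious obstacle; the work is essentially bookkeeping once the problem is phrased through the containment poset. The one point meriting care is the distinction between containment and \emph{proper} containment when distinct vertices happen to carry identical intervals: by working throughout with $\subsetneq$ rather than $\subseteq$, such coincident intervals stay incomparable, so they may share a class and never contribute to a nesting chain, and in particular there is no need to perturb $\mathcal{R}$ to make all endpoints distinct.
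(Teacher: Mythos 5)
Your proof is correct, and there is nothing in the paper to compare it against: the lemma is stated with a citation to \cite{K-O-S-nested} and no internal proof is given, so your argument stands as a self-contained justification of a quoted result. What you give is the standard height-partition (Mirsky) argument on the strict containment poset of a fixed representation $\mathcal{R}$: the depth labelling $d(I)$ makes each level a containment-free subfamily, hence a proper interval representation of the induced subgraph, and a $(k+1)$-chain is excluded in the other direction because a nested chain is pairwise strictly comparable and so meets each containment-free class at most once. Your framing of the key point --- that both conditions are checked on the \emph{same} representation, so the two existential quantifications over representations coincide without any transfer argument --- is exactly what makes the lemma immediate. Your handling of coincident intervals via $\subsetneq$ is also the right call: two equal intervals (true twins) are legal in a proper representation and must not be treated as nested, and working with strict containment throughout avoids the endpoint-perturbation step that the paper invokes elsewhere. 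Finally, note that your labelling is the same device the paper \emph{does} use nearby, in the proof (attributed to \cite{Nisse2010-pc}) that every $k$-nested interval graph is proper $k$-thin, where each interval is labelled by the length of the longest nested chain ending in it --- the mirror image of your $d(I)$ --- so your argument is very much in the spirit of the surrounding text even though the lemma itself appears there without proof.
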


This lemma and the family of graphs $G_k$ show that even if the
vertices of a proper $k$-thin graph can be partitioned into $k$
sets of vertices each of them inducing a proper interval graph, it
is not always the case that it has an interval representation
which can be partitioned into $k$ proper interval representations.

\section{Solving combinatorial optimization problems on graphs
with bounded thinness}\label{sec:algo-thin-comb}

\bigskip

Since a $k$-thin graph $G$ does not contain $\overline{(k+1)K_2}$
as induced subgraph (Theorem~\ref{thm:tK2}), it has {at most
$|V(G)|^{2k}$ maximal cliques}~\cite{Pr-few-cliques}. In
particular, the maximum weighted clique problem can be solved in
polynomial time on graphs with bounded thinness, by simple
enumeration of the maximal cliques of the
graph~\cite{T-I-A-S-cliques}.

The maximum weighted stable set problem can be solved in
polynomial time on graphs with bounded thinness, when an ordering
and a partition that are consistent are
given~\cite{M-O-R-C-thinness}.
 In the same hypothesis, the capacitated
coloring (in which there is an upper bound $\alpha_j$ on the
number of vertices of color $j$) can be solved in polynomial time,
if the number of colors $s$ is fixed~\cite{B-M-O-thin-tcs}. As a
byproduct, in the same paper it is shown that the capacitated
coloring can be solved in polynomial time for co-comparability
graphs, if the number of colors $s$ is fixed, in contrast with the
case in which the bounds $\alpha_j$ are all equal to a fixed
number $h$, that is NP-complete, even for two subclasses of
co-comparability graphs: permutation graphs (for $h \geq
6$)~\cite{Lonc-MES} and interval graphs (for $h \geq
4$)~\cite{B-J-MES2}. The hardness on interval graphs implies the
hardness for graphs of bounded thinness, since interval graphs are
the graphs with thinness $1$.

Both algorithms, the one for maximum weighted stable set and the
one for capacitated coloring with fixed number of colors, are
based on dynamic programming. One of the main results in this work
is a generalization of these algorithmic results. We describe now a framework of problems that can be solved for graphs with bounded thinness, given the representation.
\\

%\pagebreak

\noindent \textit{Instance:}
\begin{itemize}
\item A $k$-thin representation of $G=(V,E)$, with ordering $<$ of
$V$, namely $v_1 < \dots < v_n$, and partition of $V$ into $k$
classes $V^1, \dots, V^k$.

\item A family of arbitrary nonnegative weights {$w_1,\ldots,w_t$}
on $V$.

\item A family of nonnegative weights {$b_1,\ldots,b_p$} on $V$
bounded by a fixed polynomial in $n$ ($p$ fixed, $q(n)$ the bound
for the weights).
\end{itemize}

\noindent \textit{Question:} find sets {$S_1, \dots, S_r$} ($r$
fixed, not necessarily disjoint), $S_j \subseteq V$ for $1 \leq j
\leq r$, such that:

\begin{itemize}

\item the objective is to minimize or maximize a linear function
\linebreak
 $\sum_{1 \leq i \leq t; 1 \leq j \leq r}
c_{ij}w_i(S_j)$.

\item each vertex $v$ has a list $L(v)$ of combinations of the
sets $S_1, \dots, S_r$ to which it can belong (that may include
the empty combination).

\item there is an $r\times r$ symmetric matrix $M$ over $0, 1, *$,
stating the adjacency conditions on the sets $S_j$, such that for
$1 \leq i < j \leq r$, $M_{ii} = 1$ means $S_i$ is a clique,
$M_{ii} = 0$ means $S_i$ is a stable set, $M_{ij} = 1$ means all
the edges joining $S_i$ and $S_j$ have to be present, $M_{ij} = 0$
means there are no edges from $S_i$ to $S_j$.

\item there is a family of restrictions on the weight of the intersection and of the union of some families of sets. Such restrictions can be expressed as

\begin{itemize}
\item $0\leq l_{iJ\cap} \leq b_i(\bigcap_{j \in J} S_j) \leq
u_{iJ\cap}$, such that $1 \leq i \leq p$, $J \subseteq
\{1,\dots,r\}$.

\item $0\leq l_{iJ\cup} \leq b_i(\bigcup_{j \in J} S_j) \leq
u_{iJ\cup}$, such that $1 \leq i \leq p$, $J \subseteq
\{1,\dots,r\}$.
\end{itemize}

Notice that some of these restrictions can be of cardinality, if the corresponding weight function $b_i$ is constant.
\end{itemize}

The family of problems that can be modeled within this framework
includes weighted variations of list matrix partition problems
with matrices of bounded size, which in turn generalize coloring,
list coloring, list homomorphism, equitable coloring with
different objective functions, all for fixed number of colors (or
graph size in the case of homomorphism), clique cover with fixed
number of cliques, weighted stable sets, and other graph partition
problems. It models also sum-coloring and its more general version optimum cost chromatic partition problem~\cite{Jan-color} for fixed number of colors, but it does not include dominating-like problems. \\

%%%ADD HERE HOWEVER, SOME PROBLEMS CAN BE SOLVED VIA LINEAR MIM WIDTH...

We will solve such a problem as a shortest or longest path problem
(according to minimization or maximization of the objective
function) in an auxiliary acyclic digraph $D=(X,A)$ whose nodes
correspond to \emph{states} and whose arcs are weighted and
labeled. The total weight of the path is the value of the
objective function in the solution that can be built by using
the arc labels. We will used the term ``nodes'' for the digraph $D$ in order to avoid confusion with the vertices of the graph $G$.

A state is a tuple, containing:

\begin{itemize}

\item a number $1 \leq s \leq |V_G|$ indicating that we are
considering the subgraph $G_s$ of $G$, induced by $v_1, \dots,
v_s$.

\item nonnegative parameters $l_{iJ\cap}, u_{iJ\cap}, l_{iJ\cup},
u_{iJ\cup}$, for $1 \leq i \leq p$, $J \subseteq \{1,\dots,r\}$;
they are at most $2^{r+2}p$, and each of them may take a
nonnegative value at most $nq(n)$, which is an upper bound for
$b_i(V)$, for every $1 \leq i \leq p$.

\item a family of nonnegative parameters $\{\alpha_{ij}\}_{1 \leq
i \leq k; 1 \leq j \leq r}$, meaning that we cannot pick for $S_j$
a vertex of the first $\alpha_{ij}$ vertices of the set $V^i$ of
the partition; there are $kr$ such parameters and each of them may take a
nonnegative value at most $n-1$.

\item a family of nonnegative parameters $\{\beta_{ij}\}_{1 \leq i
\leq k; 1 \leq j \leq r}$, meaning that we cannot pick for $S_j$ a
vertex on the last $\beta_{ij}$ vertices of the set $V^i$ of the
partition; there are $kr$ such parameters  and each of them may take a nonnegative
value at most $n-1$.

\end{itemize}

The total number of states is then at most
$n^{2kr+1}(nq(n))^{2^{r+2}p}$, that is polynomial in $n$, since
$k$, $r$, and $p$ are constant and $q(n)$ is polynomial in $n$.

The digraph $D$ will have nodes that correspond to possible
states, organized in layers $X_0, X_1, \dots, X_n$ such that $X_0$
contains only one node $x_0$, and the layer $X_s$ contains the
states whose first parameter is $s$. The layer $X_n$ contains also
only one node, corresponding to the state $(n,$ $\{l_{iJ\cap}\},$
$\{u_{iJ\cap}\},$ $\{l_{iJ\cup}\},$ $\{u_{iJ\cup}\},$
$\{\alpha_{ij}\},$ $\{\beta_{ij}\})$, where the parameters
$\{l_{iJ\cap}\}, \{u_{iJ\cap}\}, \{l_{iJ\cup}\}, \{u_{iJ\cup}\}$
are the ones in the original formulation of the problem and
$\alpha_{ij} = \beta_{ij} = 0$ for every $1 \leq i \leq k$, $1
\leq j \leq r$.

All arcs of $A$ have the form $(u, w)$ with $u\in X_s$ and $w\in
X_{s+1}$, for some $0\leq s \leq n-1$.

We associate with each node of $X$ a suitable problem, in the same
framework, whose parameters correspond to the parameters in the
state, but with additional constraints associated with the
parameters $\{\alpha_{ij}\}$ and $\{\beta_{ij}\}$.

We will define the arcs in such a way that a node is reachable
from the node in the layer $X_0$ if and only if the associated
problem has a solution. The length of the path will be the weight
of the solution, and the set of arc labels will encode the
solution. Let us describe the arcs of the digraph. \\

Let $w$ be a node with parameters $(1,$ $\{l_{iJ\cap}\},$
$\{u_{iJ\cap}\},$ $\{l_{iJ\cup}\},$ $\{u_{iJ\cup}\},$
$\{\alpha_{ij}\},$ $\{\beta_{ij}\})$.

Let $1 \leq \ell \leq k$ such that $v_1 \in V^\ell$. For each
$\tilde{J} \in L(v_1)$ (in particular $\tilde{J} \subseteq
\{1,\dots,r\}$), such that:

\begin{enumerate}[{1.}1]
\item \label{cond:1} For each $j \in \tilde{J}$, $\beta_{\ell j} =
\alpha_{\ell j} = 0$.

\item \label{cond:2} For each $J \subseteq \tilde{J}$, $l_{iJ\cap}
\leq b_i(v_1) \leq u_{iJ\cap}$.

\item \label{cond:3} For each $J \not \subseteq \tilde{J}$,
$l_{iJ\cap} = 0$.

\item \label{cond:4} For each $J$ such that $J \cap \tilde{J} \neq
\emptyset$, $l_{iJ\cup} \leq b_i(v_1) \leq u_{iJ\cup}$.

\item \label{cond:5} For each $J$ such that $J \cap \tilde{J} =
\emptyset$, $l_{iJ\cup} = 0$.
\end{enumerate}

we add an arc from $x_0$ to $w$, labeled by $\tilde{J}$ and of
weight $\sum_{1 \leq i \leq t; j \in \tilde{J}} c_{ij}w_i(v_1)$.
If no $\tilde{J}$ satisfies conditions
1.\ref{cond:1}--1.\ref{cond:5}, no arc ending in $w$ is added. If
more than one arc $x_0w$ was added, we can keep only the one with
maximum (resp. minimum) weight if we are solving a maximization
(resp. minimization) problem.

Note that if we add the arc $x_0w$ labeled by $\tilde{J}$, then
the solution $S_j = \{v_1\}$ for $j \in \tilde{J}$, $S_j =
\emptyset$ for $j \not \in \tilde{J}$ has weight $\sum_{1 \leq i
\leq t; j \in \tilde{J}} c_{ij}w_i(v_1)$ and satisfies the state
described by $w$: condition 1.\ref{cond:1} says that $v_1$ (the
first and last vertex of $V^\ell$ in $G_1$) is allowed to be
picked for every set $S_j$ for $j \in \tilde{J}$; conditions
1.\ref{cond:2}--1.\ref{cond:5} say that the assignment does not
violate weight constraints. \\

Let $w$ be a node with parameters $(s,$ $\{l_{iJ\cap}\},$
$\{u_{iJ\cap}\},$ $\{l_{iJ\cup}\},$ $\{u_{iJ\cup}\},$
$\{\alpha_{ij}\},$ $\{\beta_{ij}\})$, $1 < s \leq n$.

Let $1 \leq \ell \leq k$ such that $v_s \in V^\ell$. For each
$\tilde{J} \in L(v_s)$, such that:

\begin{enumerate}[{$s$.}1]
\item \label{cond:s1} For each $j \in \tilde{J}$, $\beta_{\ell j}
= 0$.

\item  \label{cond:s2} For each $j \in \tilde{J}$, $\alpha_{\ell
j} < |V^\ell \cap \{v_1, \dots, v_s\}|$.

\item \label{cond:s3} For each $J \subseteq \tilde{J}$, $b_i(v_s)
\leq u_{iJ\cap}$.

\item \label{cond:s4} For each $J$ such that $J \cap \tilde{J}
\neq \emptyset$, $b_i(v_s) \leq u_{iJ\cup}$.
\end{enumerate}

we add an arc from $u$ to $w$, labeled by $\tilde{J}$ and of
weight $\sum_{1 \leq i \leq t; j \in \tilde{J}} c_{ij}w_i(v_s)$,
where $u$ has parameters $(s-1,$ $\{l'_{iJ\cap}\},$
$\{u'_{iJ\cap}\},$ $\{l'_{iJ\cup}\},$ $\{u'_{iJ\cup}\},$
$\{\alpha'_{ij}\},$ $\{\beta'_{ij}\})$, such that:

\begin{enumerate}[{$s'$.}1]
\item \label{conds1:1} Let $1 \leq j \leq r$. If there exists $j'
\in \tilde{J}$ such that $M_{jj'} = 0$, then $\beta'_{\ell j} =
\max\{\beta_{\ell j}-1,|N(v_s) \cap V^\ell \cap \{1, \dots,
s-1\}|\}$, and for $1 \leq i \leq k$, $i\neq \ell$, $\beta'_{ij} =
\max\{\beta_{ij},|N(v_s) \cap V^i \cap \{1, \dots, s-1\}|\}$.
Otherwise, $\beta'_{\ell j} = \max\{0,\beta_{\ell j}-1\}$, and for
$1 \leq i \leq k$, $i\neq \ell$, $\beta'_{ij} = \beta_{ij}$.

\item \label{conds1:2} Let $1 \leq j \leq r$. If there exists $j'
\in \tilde{J}$ such that $M_{jj'} = 1$, then

$\alpha'_{\ell j} = \max\{\min\{|V^\ell \cap \{1, \dots,
s-1\}|,\alpha_{\ell j}\},|\overline{N}(v_s) \cap V^\ell \cap \{1,
\dots, s-1\}|\}$, and for $1 \leq i \leq k$, $i\neq \ell$,
$\alpha'_{ij} = \max\{\alpha_{ij},|\overline{N}(v_s) \cap V^i \cap
\{1, \dots, s-1\}|\}$. Otherwise, $\alpha'_{\ell j} =
\min\{|V^\ell \cap \{1, \dots, s-1\}|,\alpha_{\ell j}\}$, and for
$1 \leq i \leq k$, $i\neq \ell$, $\alpha'_{ij} = \alpha_{ij}$.

\item \label{conds1:3} For each $J \subseteq \tilde{J}$,
$l'_{iJ\cap} = \max\{0,l_{iJ\cap} - b_i(v_s)\}$ and $u'_{iJ\cap} =
u_{iJ\cap} - b_i(v_s)$.

\item \label{conds1:4} For each $J \not \subseteq \tilde{J}$,
$l'_{iJ\cap} = l_{iJ\cap}$ and $u'_{iJ\cap} = u_{iJ\cap}$.

\item \label{conds1:5} For each $J$ such that $J \cap \tilde{J}
\neq \emptyset$, $l'_{iJ\cup} = \max\{0,l_{iJ\cup} - b_i(v_s)\}$
and $u'_{iJ\cup} = u_{iJ\cup} - b_i(v_s)$.

\item \label{conds1:6} For each $J$ such that $J \cup \tilde{J} =
\emptyset$, $l'_{iJ\cup} = l_{iJ\cup}$ and $u'_{iJ\cup} =
u_{iJ\cup}$.
\end{enumerate}

If no $\tilde{J}$ satisfies conditions
$s$.\ref{cond:s1}--$s$.\ref{cond:s4}, no arc ending in $w$ is
added. If more than one arc from the same vertex $u$ to $w$ was
added, we can keep only the one with maximum (resp. minimum)
weight if we are solving a maximization (resp. minimization)
problem.

That is, if an arc is added, the arc corresponds to the choice of
adding the vertex $v_s$ to the sets $\{S_j\}_{j\in \tilde{J}}$,
the conditions required imply that the choice is valid for $w$ in
the case that the state described by $u$ admits a solution, the
label of the arc keeps track of the choice made, and the cost
corresponds to the weight that the choice adds to the objective
function.

Note that if we add the arc $uw$ labeled by $\tilde{J}$, then for
a solution $\{S'_j\}_{1\leq j\leq r}$ for $G_{s-1}$ satisfying the
state described by $u$, then the solution $\{S_j\}_{1\leq j\leq
r}$ for $G_{s}$ such that $S_j = S'_j \cup \{v_s\}$ for $j\in
\tilde{J}$, $S_j = S'_j$ for $j\not \in \tilde{J}$ satisfies the
state described by $w$. Conditions $s$.\ref{cond:s1} and
$s$.\ref{cond:s2} say that $v_s$ (the last vertex of $V^\ell$ in
$G_{s}$) is allowed to be picked for every set  $S_j$ for $j \in
\tilde{J}$. Condition $s'$.\ref{conds1:1} ensures on one hand that
the conditions imposed by the parameters $\{\beta_{ij}\}$ in $w$
are satisfied by the solution of $u$, and, on the other hand, that
if $j' \in \tilde{J}$ and $1 \leq j \leq r$ are such that
$M_{jj'}=0$ then no neighbor of $v_s$ belongs to $S'_j$, as
required. Similarly, condition $s'$.\ref{conds1:2} ensures on one
hand that the conditions imposed by the parameters
$\{\alpha_{ij}\}$ in $w$ are satisfied also by the solution of
$u$, and, on the other hand, that if $j' \in \tilde{J}$ and $1
\leq j \leq r$ are such that $M_{jj'}=1$ then all vertices in
 $S'_j$ are adjacent to $v_s$, as required. These conditions
 strongly use that the order and the partition are consistent.
Finally, conditions $s$.\ref{conds1:3}--$s$.\ref{conds1:4}, and
$s'$.\ref{conds1:3}--$s'$.\ref{conds1:6} ensure that the solution
does not violate weight constraints.

Moreover, the difference of weight of the solution $\{S_j\}_{1\leq
j\leq r}$ with respect to $\{S'_j\}_{1\leq j\leq r}$ is exactly
$\sum_{1
\leq i \leq t; j \in \tilde{J}} c_{ij}w_i(v_s)$.\\

In that way, a directed path in the digraph corresponds to an
assignment of vertices of the graph to lists of sets and its
weight is the value of the objective function for the
corresponding assignment.

The digraph has a polynomial number of nodes and can be built in
polynomial time. Since it is acyclic, both the longest path and
shortest path can be computed in linear time in the size of the
digraph by topological sorting.

\begin{remark}
The thinness is not preserved by the complement operation of graphs (see for instance Theorem~\ref{thm:tK2}). However, for every fixed $k$, all the problems that can be modeled in this framework can be solved for the complement $\overline{G}$ of a $k$-thin graph $G$, in the same framework, simply by swapping ones and zeroes in the restriction matrix $M$.
\end{remark}

\subsection{Extending the family of combinatorial optimization problems solvable on graphs with bounded proper
thinness}\label{sec:algo-pthin-comb}

\bigskip

We start by the following observation: in a proper $k$-thin
representation of a graph $G$, with ordering $<$ of $V$, namely
$v_1 < \dots < v_n$, and partition of $V$ into $k$ classes $V^1,
\dots, V^k$, for each pair of vertices $v_s < v_r$ that are in the
same class, $N[v_s] \cap \{v_1, \dots, v_s\} \supseteq N[v_r] \cap
\{v_1, \dots, v_s\}$. This allows us to handle other kinds of
restrictions as for example domination type constraints.

Namely, if we are considering the subgraph $G_s$ of $G$ induced by
$\{v_1, \dots, v_s\}$ but we ``keep in mind'' that we still need
to dominate some of the vertices in $\{v_{s+1}, \dots, v_n\}$ with
vertices of $G_s$, we can summarize these conditions into at most
$k$ of them (each imposed by vertices of $\{v_{s+1}, \dots, v_n\}$
in each partition class).

For graphs with bounded proper thinness $k$, when the proper
$k$-thin representation of the graph is given, we can add now to
the instance (with respect to Section~\ref{sec:algo-thin-comb})
this kind of restrictions:

\begin{itemize}
\item {$l_{ij(N)} \leq |S_i \cap N(v)| \leq u_{ij(N)} \quad
\forall v \in S_j$}, such that $l_{ij(N)} \in \{0,1\}$ and
$u_{ij(N)} \in \{1,\infty\}$ (it can be $i=j$), $1 \leq i,j \leq
r$.

\item {$l_{ij[N]} \leq |S_i \cap N[v]| \leq u_{ij[N]} \quad
\forall v \in S_j$}, such that $l_{ij[N]}\in \{0,1\}$ and
$u_{ij[N]}\in \{1,\infty\}$ (it can be $i=j$), $1 \leq i,j \leq
r$.
\end{itemize}

In this way the framework includes {domination-type} problems in
the literature and their {weighted} versions, such as
existence/minimum (weighted) {independent dominating set}, minimum
(weighted) {total} dominating set, minimum {perfect} dominating
set and existence/minimum (weighted) {efficient} dominating set,
b-coloring~\cite{I-M-b-col} with fixed number of colors. \\

%%max coloring fixed number of colors co-comparability by guessing the weight of the classes

We will keep the notation of Section~\ref{sec:algo-thin-comb} and
describe how to modify the algorithm in order to take into account
the new restrictions. Now the vertex order and the partition of
$G$ are strongly consistent.

Each state now will be augmented with some new parameters:

\begin{itemize}
\item a family of nonnegative parameters $\{\gamma_{ij}\}_{1 \leq
i \leq k; 1 \leq j \leq r}$, meaning that the last $\gamma_{ij}$
vertices of $V^i$ have already a neighbor in $S_j$ (of index
higher than them); there are $kr$ such parameters and each of them may take a
nonnegative value at most $n-1$.

\item a family of nonnegative parameters $\{\gamma^2_{ij}\}_{1
\leq i \leq k; 1 \leq j \leq r}$, meaning that the last
$\gamma^2_{ij}$ vertices of $V^i$ have already two neighbors in
$S_j$ (of index higher than them); there are $kr$ such parameters  and each of them
may take a nonnegative value at most $n-1$.

\item a family of nonnegative parameters $\{\lambda_{ijc}\}_{1
\leq i,c \leq k; 1 \leq j \leq r}$, meaning that, for each value
$1 \leq c \leq k$, $S_j$ has to contain at least one vertex in the
set that is the union over $1 \leq i \leq k$ of the last
$\lambda_{ijc}$ vertices of $V^i$ (if the union is empty, this means no restriction associated with $(c,S_j)$); there are $k^2r$ such parameters
and each of them may take a nonnegative value at most $n-1$.
\end{itemize}

The total number of states is then multiplied by at most
$n^{k^2r+2kr}$, that keeps it polynomial in $n$, since $k$ and
$r$ are constant.\\

The value of all these parameters in the only node of the layer
$X_n$ of the digraph is zero. \\

Now the problems associated with the nodes of $X$ will have the
additional constraints associated with the new restrictions and
the parameters $\{\gamma_{ij}\}$, $\{\gamma^2_{ij}\}$, and
$\{\lambda_{ijc}\}$.

Let us describe the additional conditions for the arcs of the digraph, whose labels and weights are still the same as in Section~\ref{sec:algo-thin-comb}. \\

Let $w$ be a node with parameters $(1,$ $\dots,$
$\{\gamma_{ij}\},$ $\{\gamma^2_{ij}\},$ $\{\lambda_{ijc}\})$.

Let $1 \leq \ell \leq k$ such that $v_1 \in V^\ell$. For each
$\tilde{J} \in L(v_1)$ (in particular $\tilde{J} \subseteq
\{1,\dots,r\}$) satisfying 1.\ref{cond:1}--1.\ref{cond:5}, and
such that:

\begin{enumerate}[{1.}1]
\setcounter{enumi}{5}

\item \label{cond:6} For each $1 \leq i \leq r$, $j \in
\tilde{J}$, such that $l_{ij(N)} = 1$, $\gamma_{\ell i} > 0$.

\item \label{cond:7} For each $i \not \in \tilde{J}$, $j \in
\tilde{J}$, such that $l_{ij[N]} = 1$, $\gamma_{\ell i} > 0$.

\item \label{cond:8} For each $1 \leq i \leq r$, $j \in
\tilde{J}$, such that $u_{ij(N)} = 1$ or $u_{ij[N]} = 1$,
$\gamma^2_{\ell i} = 0$.

\item \label{cond:9} For each $i, j \in \tilde{J}$, such that
$u_{ij[N]} = 1$, $\gamma_{\ell i} = 0$.

\item \label{cond:10} For each $j \not \in \tilde{J}$ and for each
$1\leq c \leq k$, $\lambda_{\ell jc} = 0$.
\end{enumerate}

we add an arc from $x_0$ to $w$, labeled by $\tilde{J}$ and of
weight $\sum_{1 \leq i \leq t; j \in \tilde{J}} c_{ij}w_i(v_1)$.
If no $\tilde{J}$ satisfies conditions
1.\ref{cond:1}--1.\ref{cond:10}, no arc ending in $w$ is added. If
more than one arc $x_0w$ was added, we can keep only the one with
maximum (resp. minimum) weight if we are solving a maximization
(resp. minimization) problem.

Note that if we add the arc $x_0w$ labeled by $\tilde{J}$, then
the solution $S_j = \{v_1\}$ for $j \in \tilde{J}$, $S_j =
\emptyset$ for $j \not \in \tilde{J}$ has weight $\sum_{1 \leq i
\leq t; j \in \tilde{J}} c_{ij}w_i(v_1)$ and satisfies the state
described by $w$: conditions 1.\ref{cond:1}--1.\ref{cond:5} ensure
the properties required in Section~\ref{sec:algo-thin-comb}; conditions 1.\ref{cond:6}--1.\ref{cond:9} ensure the validity of the two new families of restrictions about lower and upper bounds of
neighbors of vertices of one set in other set, and condition 1.\ref{cond:10} ensures that the restrictions imposed by the parameters $\{\lambda_{ijc}\}$ are satisfied. \\

Let $w$ be a node with parameters $(s,$ $\{l_{iJ\cap}\},$
$\{u_{iJ\cap}\},$ $\{l_{iJ\cup}\},$ $\{u_{iJ\cup}\},$
$\{\alpha_{ij}\},$ $\{\beta_{ij}\}$, $\{\gamma_{ij}\},$
$\{\gamma^2_{ij}\},$ $\{\lambda_{ijc}\})$, $1 < s \leq n$.

Let $1 \leq \ell \leq k$ such that $v_s \in V^\ell$. For each
$\tilde{J} \in L(v_s)$ satisfying
$s$.\ref{cond:s1}--$s$.\ref{cond:s4}, and such that:

\begin{enumerate}[{$s$.}1]
\setcounter{enumi}{4}

\item \label{cond:s5} For each $1 \leq i \leq r$, $j \in
\tilde{J}$, such that $u_{ij(N)} = 1$ or $u_{ij[N]} = 1$,
$\gamma^2_{\ell i} = 0$.

\item \label{cond:s6} For each $i, j \in \tilde{J}$, such that
$u_{ij[N]} = 1$, $\gamma_{\ell i} = 0$.

\item \label{cond:s7} For each $j \not \in \tilde{J}$ and for each
$1\leq c \leq k$, either $\lambda_{\ell jc} = 0$, or
$\lambda_{\ell jc} > 1$, or there exists $1 \leq i \leq k$, $i
\neq \ell$, such that $\lambda_{ijc} > 0$ (i.e., the union over $1
\leq i \leq k$ of the last $\lambda_{ijc}$ vertices of $V^i$ is
not $\{v_s\}$).

\item \label{cond:s8} For each $1 \leq i \leq r$ such that
$\gamma_{\ell i} = 0$ and there exists $j \in \tilde{J}$ such that
$l_{ij(N)} = 1$, $N(v_s) \cap \{1, \dots, s-1\} \neq \emptyset$.

\item \label{cond:s9} For each $i \not \in \tilde{J}$ such that
$\gamma_{\ell i} = 0$ and there exists $j \in \tilde{J}$ such that
$l_{ij[N]} = 1$, $N(v_s) \cap \{1, \dots, s-1\} \neq \emptyset$.
\end{enumerate}

Let $\{\lambda^0_{ijc}\}_{1 \leq i,c\leq k; 1 \leq j \leq r}$ be
defined this way: for every $j\in \tilde{J}$ and every $1 \leq c
\leq k$ such that $\lambda_{\ell jc} > 0$, let $\lambda^0_{ijc} =
0$ for every $1 \leq i \leq k$; for every $j \in \tilde{J}$ and
every $1 \leq c \leq k$ such that $\lambda_{\ell jc} = 0$, let
$\lambda^0_{ijc} = \lambda_{ijc}$ for every $1 \leq i \leq k$; for
every $j\not \in \tilde{J}$ and every $1 \leq c \leq k$, let
$\lambda^0_{\ell jc} = \max\{0,\lambda_{\ell jc}-1\}$ and let
$\lambda^0_{ijc} = \lambda_{ijc}$ for every $1 \leq i \leq k$, $i
\neq \ell$.

Let $\{\lambda^1_{jc}\}_{1 \leq c\leq k; 1 \leq j \leq r}$ be
defined as $\lambda^1_{jc} = 0$ if $\lambda^0_{ijc} = 0$ for every
$1 \leq i \leq k$, $\lambda^1_{jc} = 1$ otherwise.

We add an arc from $u$ to $w$, labeled by $\tilde{J}$ and of
weight $\sum_{1 \leq i \leq t; j \in \tilde{J}} c_{ij}w_i(v_s)$,
where $u$ has parameters $(s-1,$ $\{l'_{iJ\cap}\},$
$\{u'_{iJ\cap}\},$ $\{l'_{iJ\cup}\},$ $\{u'_{iJ\cup}\},$
$\{\alpha'_{ij}\},$ $\{\beta'_{ij}\}$, $\{\gamma'_{ij}\},$
$\{{\gamma^2}'_{ij}\},$ $\{\lambda'_{ijc}\})$, satisfies
conditions $s'$.\ref{conds1:2}--$s'$.\ref{conds1:6}, and:

\begin{enumerate}[{$s'$.}1]
\setcounter{enumi}{6}

\item \label{conds1:7} For each $1 \leq i \leq r$ such that
$\gamma_{\ell i} = 0$ and there exists $j \in \tilde{J}$ such that
$l_{ij(N)} = 1$, if $\lambda^1_{i\ell} = 0$, then
$\lambda'_{j'i\ell} = |N(v_s) \cap V^{j'} \cap \{1, \dots, s-1\}|$
for each $1 \leq j' \leq k$; otherwise, $\lambda'_{j'i\ell} =
\lambda^0_{j'i\ell}$ for every $1 \leq j' \leq k$ (recall that, by
the observations above about proper thinness, $\lambda^0_{j'i\ell}
= \min\{\lambda^0_{j'i\ell},|N(v_s) \cap V^{j'} \cap \{1, \dots,
s-1\}|\}$).

\item \label{conds1:8} For each $i \not \in \tilde{J}$ such that
$\gamma_{\ell i} = 0$ and there exists $j \in \tilde{J}$ such that
$l_{ij[N]} = 1$, if $\lambda^1_{i\ell} = 0$, then
$\lambda'_{j'i\ell} = |N(v_s) \cap V^{j'} \cap \{1, \dots, s-1\}|$
for each $1 \leq j' \leq k$; otherwise, $\lambda'_{j'i\ell} =
\lambda^0_{j'i\ell}$ for every $1 \leq j' \leq k$.

\item \label{conds1:9} For each $i, j, c$ not comprised in
conditions $s'$.\ref{conds1:7} and $s'$.\ref{conds1:8},
$\lambda'_{ijc} = \lambda^0_{ijc}$.

\item \label{conds1:10} Let $1 \leq j \leq r$. If there exists $j'
\in \tilde{J}$ satisfying at least one of the following:
\begin{itemize}
\item $M_{jj'} = 0$

\item ($u_{jj'(N)} = 1$ or $u_{jj'[N]} = 1$) and $\gamma_{\ell j}
> 0$

\item $j \in \tilde{J}$ and $u_{jj'[N]} = 1$
\end{itemize}
then, $\beta'_{\ell j} = \max\{\beta_{\ell j}-1,|N(v_s) \cap
V^\ell \cap \{1, \dots, s-1\}|\}$, and for $1 \leq i \leq k$,
$i\neq \ell$, $\beta'_{ij} = \max\{\beta_{ij},|N(v_s) \cap V^i
\cap \{1, \dots, s-1\}|\}$. Otherwise, $\beta'_{\ell j} =
\max\{0,\beta_{\ell j}-1\}$, and for $1 \leq i \leq k$, $i\neq
\ell$, $\beta'_{ij} = \beta_{ij}$.

\item \label{conds1:11} For each $j \in \tilde{J}$: if $|N(v_s)
\cap V^{\ell} \cap \{1, \dots, s-1\}| \geq \gamma_{\ell j} - 1$,
then $\gamma'_{\ell j} = |N(v_s) \cap V^{\ell} \cap \{1, \dots,
s-1\}|$ and ${\gamma^2}'_{\ell j} = \max\{0,\gamma_{\ell j}-1\}$;
otherwise, $\gamma'_{\ell j} = \max\{0,\gamma_{\ell j}-1\}$ and
${\gamma^2}'_{\ell j} = \max\{\gamma^2_{\ell j}-1,|N(v_s) \cap
V^{\ell} \cap \{1, \dots, s-1\}|\}$.

\item \label{conds1:12} For each $j \in \tilde{J}$, $1 \leq i \leq
k$, $i \neq \ell$: if $|N(v_s) \cap V^{i} \cap \{1, \dots, s-1\}|
\geq \gamma_{ij}$, then $\gamma'_{ij} = |N(v_s) \cap V^{i} \cap
\{1, \dots, s-1\}|$ and ${\gamma^2}'_{ij} = \gamma_{ij}$;
otherwise, $\gamma'_{ij} = \gamma_{ij}$ and ${\gamma^2}'_{ij} =
\max\{\gamma^2_{ij},|N(v_s) \cap V^{i} \cap \{1, \dots, s-1\}|\}$.
\end{enumerate}

If no $\tilde{J}$ satisfies conditions
$s$.\ref{cond:s1}--$s$.\ref{cond:s9}, no arc ending in $w$ is
added. If more than one arc from the same vertex $u$ to $w$ was
added, we can keep only the one with maximum (resp. minimum)
weight if we are solving a maximization (resp. minimization)
problem.

That is, if an arc is added, the arc corresponds to the choice of
adding the vertex $v_s$ to the sets $\{S_j\}_{j\in \tilde{J}}$,
the conditions required imply that the choice is valid for $w$ in
the case that the state described by $u$ admits a solution, the
label of the arc keeps track of the choice made, and the cost
corresponds to the weight that the choice adds to the objective
function.

Note that if we add the arc $uw$ labeled by $\tilde{J}$, then for
a solution $\{S'_j\}_{1\leq j\leq r}$ for $G_{s-1}$ satisfying the
state described by $u$, then the solution $\{S_j\}_{1\leq j\leq
r}$ for $G_{s}$ such that $S_j = S'_j \cup \{v_s\}$ for $j\in
\tilde{J}$, $S_j = S'_j$ for $j\not \in \tilde{J}$ satisfies the
state described by $w$.

%%%%%%%%%%%%%%%%%
Condition $s'$.\ref{conds1:10} ensures on one hand that the
conditions imposed by the parameters $\{\beta_{ij}, u_{ij(N)},
u_{ij[N]}\}$ in $w$ are satisfied by the solution of $u$, and, on
the other hand, that if $j' \in \tilde{J}$ and $1 \leq j \leq r$
are such that $M_{jj'}=0$ then no neighbor of $v_s$ belongs to
$S'_j$, as required. Conditions
$s'$.\ref{conds1:7}--$s'$.\ref{conds1:9} together with
$s$.\ref{cond:s7}--$s$.\ref{cond:s9} define parameters
$\{\lambda'_{ijc}\}$ in $u$ in order to guarantee in $w$ both the
conditions imposed by the lower bounds $\{l_{ij(N)},l_{ij[N]}\}$
and those imposed by the parameters $\{\lambda_{ijc}\}$. Finally,
conditions $s'$.\ref{conds1:11} and $s'$.\ref{conds1:12} properly
update the definition of parameters
$\{\gamma'_{ij},{\gamma^2}'_{ij}\}$ according to the choice
$\tilde{J}$ for $v_s$. Conditions
$s'$.\ref{conds1:2}--$s'$.\ref{conds1:6} were analyzed above in
Section~\ref{sec:algo-thin-comb}.

As in that case, the difference of weight of the solution
$\{S_j\}_{1\leq j\leq r}$ with respect to $\{S'_j\}_{1\leq j\leq
r}$ is exactly $\sum_{1
\leq i \leq t; j \in \tilde{J}} c_{ij}w_i(v_s)$.\\

In that way, a directed path in the digraph corresponds to an
assignment of vertices of the graph to lists of sets and its
weight is the value of the objective function for the
corresponding assignment.

The digraph has a polynomial number of nodes and can be built in
polynomial time. Since it is acyclic, both the longest path and
shortest path can be computed in linear time in the size of the
digraph by topological sorting.

%\begin{remark}
%The restrictions can be further generalized for bounded values of
%$l_{ij(N)}$, $u_{ij(N)}$, $l_{ij[N]}$, $u_{ij[N]}$,
%$l_{ij\overline{N}}$ and $u_{ij\overline{N}}$. Also for
%{$b_z(S_i \cap N(v))$} or
%$b_z(\bigcap_{i \in I} S_i \cap N(v))$ or $b_z(\bigcup_{i \in I}
%S_i \cap N(v))$ instead of $|S_i \cap N(v)|$ and $v \in \bigcap_{j
%\in J} S_j$ or $v \in \bigcup_{j \in J} S_j$ instead of $v \in
%S_j$.
%\end{remark}

\section{Thinness and graph operations}\label{sec:thin-and-oper}

In this section we analyze the behavior of the thinness and proper
thinness under different graph operations, namely union, join, and
Cartesian product. The first two results allow us to fully
characterize $k$-thin graphs by forbidden induced subgraphs within
the class of cographs. The third result is used to solve in
polynomial time the $t$-rainbow domination problem for fixed $t$
on graphs with
bounded thinness. \\

Let $G_1 = (V_1,E_1)$ and $G_2 = (V_2,E_2)$ be two graphs with
$V_1 \cap V_2 = \emptyset$. The \emph{union} of $G_1$ and $G_2$ is
the graph $G_1 \cup G_2 = (V_1 \cup V_2, E_1 \cup E_2)$, and the
\textit{join} of $G_1$ and $G_2$ is the graph $G_1 \vee G_2 = (V_1
\cup V_2, E_1 \cup E_2 \cup V_1 \times V_2)$ (i.e.,
$\overline{G_1\vee G_2} = \overline{G_1} \cup \overline{G_2}$).

\begin{theorem}\label{thm:union}
Let $G_1$ and $G_2$ be graphs. Then $\thin(G_1 \cup G_2) =
\max\{\thin(G_1),$ $\thin(G_2)\}$ and $\pthin(G_1 \cup G_2) =
\max\{\pthin(G_1),\pthin(G_2)\}$.
\end{theorem}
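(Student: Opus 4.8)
For the inequality $\thin(G_1 \cup G_2) \geq \max\{\thin(G_1), \thin(G_2)\}$, I would note that thinness is monotone under induced subgraphs: both $G_1$ and $G_2$ are induced subgraphs of $G_1 \cup G_2$, and restricting a consistent ordering/partition of the union to either $V_1$ or $V_2$ yields a consistent ordering/partition of that part. The same restriction argument handles strong consistency, giving $\pthin(G_1 \cup G_2) \geq \max\{\pthin(G_1),\pthin(G_2)\}$. This direction is routine and applies uniformly to both parameters.

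For the harder direction, $\thin(G_1 \cup G_2) \leq \max\{\thin(G_1), \thin(G_2)\}$, let $k = \max\{\thin(G_1),\thin(G_2)\}$. Take consistent orderings and $k$-class partitions of each $G_i$ (padding the smaller with empty classes so both use exactly $k$ classes). I would build an ordering of $V_1 \cup V_2$ by \emph{concatenation}, placing all of $V_1$ before all of $V_2$, and form the partition by pairing up class $i$ of $G_1$ with class $i$ of $G_2$ into a single merged class $V^i$. To verify consistency of this combined ordering and partition, I would check each triple $(r,s,t)$ with $v_r,v_s$ in the same merged class and $v_tv_r \in E$. Since there are no edges between $V_1$ and $V_2$ in the union, the relevant vertices forcing the condition $v_tv_s \in E$ all lie in the same part $G_1$ or $G_2$, so the condition follows from the consistency already assumed within that part. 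The cross-part triples are vacuous because the edge $v_tv_r$ cannot cross between $V_1$ and $V_2$. For the proper version, concatenation also respects reversal: since both $v_1,\dots,v_n$ and its reverse are consistent within each part, and concatenation preserves this when the reverse of $G_2$'s ordering comes first, I would argue the merged ordering is strongly consistent, giving $\pthin(G_1 \cup G_2) \leq k$.

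The main obstacle I anticipate is verifying that the concatenated ordering is strongly consistent for the proper-thinness claim, since strong consistency requires both the forward and backward orderings to be consistent simultaneously. The delicate point is that a single fixed concatenation $V_1$-then-$V_2$ must work in both directions; I expect this to go through precisely because the absence of edges between $V_1$ and $V_2$ decouples the two blocks, so any triple that could create a violation is confined to one block, where strong consistency is inherited. I would make this decoupling explicit as the key lemma underlying both the consistency and strong-consistency arguments.
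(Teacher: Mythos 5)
Your proposal is correct and follows essentially the same route as the paper: the lower bound via heredity under induced subgraphs, and the upper bound via concatenating the two orderings, merging the $i$-th classes of the two partitions, and observing that any triple whose first and third vertices are adjacent lies entirely within $V(G_1)$ or within $V(G_2)$, so (strong) consistency is inherited from the parts. The decoupling lemma you anticipate is exactly the paper's key observation, and it does dispose of the reversal issue for strong consistency just as you expect.
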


\begin{proof}
Since both  $G_1$ and $G_2$ are induced subgraphs of $G_1 \cup
G_2$, then $\thin(G_1 \cup G_2) \geq
\max\{\thin(G_1),\thin(G_2)\}$ and the same holds for the proper
thinness.

Let $G_1$ and $G_2$ be two graphs with thinness (resp. proper
thinness) $t_1$ and $t_2$, respectively. Let $v_1,\dots,  v_{n_1}$
and $(V_1^1, \ldots, V_1^{t_1})$ be an ordering and a partition of
$V(G_1)$ which are consistent (resp. strongly consistent). Let
$w_1,\dots, w_{n_2}$ and $(V_2^1, \ldots, V_2^{t_2})$ be an
ordering and a partition of $V(G_2)$ which are consistent (resp.
strongly consistent). Suppose without loss of generality that $t_1
\leq t_2$. For $G = G_1 \cup G_2$, define a partition $V^1,\dots,
V^{t_2}$ such that $V^i = V_1^i \cup V_2^i$ for $i = 1, \dots,
t_1$ and $V^i = V_2^i$ for $i = t_1+1, \dots, t_2$, and define
$v_1,\dots, v_{n_1},w_1,\dots, w_{n_2}$ as an ordering of the
vertices. By definition of union of graphs, if three ordered
vertices according to the order defined in $V(G_1 \cup G_2)$ are
such that the first and the third are adjacent, either the three
vertices belong to $V(G_1)$ or the three vertices belong to
$V(G_2)$. Since the order and the partition restricted to each of
$G_1$ and $G_2$ are the original ones, the properties required for
consistency (resp. strong consistency) are satisfied.
\end{proof}

\begin{theorem}\label{thm:join}
Let $G_1$ and $G_2$ be graphs. Then $\thin(G_1 \vee G_2) \leq
\thin(G_1)+\thin(G_2)$ and $\pthin(G_1 \vee G_2) \leq
\pthin(G_1)+\pthin(G_2)$. Moreover, if $G_2$ is complete, then
$\thin(G_1 \vee G_2) = \thin(G_1)$.
\end{theorem}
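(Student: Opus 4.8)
The plan is to prove the join theorem in three parts, treating the two inequalities and the equality claim in turn. The underlying idea in every case is that a consistent ordering and partition for $G_1 \vee G_2$ can be assembled from those of $G_1$ and $G_2$ by concatenating the orderings and keeping the partition classes disjoint, using at most $\thin(G_1)+\thin(G_2)$ classes. The key structural fact I would exploit is the identity $\overline{G_1 \vee G_2} = \overline{G_1} \cup \overline{G_2}$ recorded just before the statement: a join in $G$ is a disjoint union in the complement. Combined with Theorem~\ref{thm:union}, this suggests that the complement of a join is a union, whose thinness we already control.

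For the first inequality, $\thin(G_1 \vee G_2) \leq \thin(G_1)+\thin(G_2)$, I would take consistent representations $(v_1,\dots,v_{n_1})$ with classes $(V_1^1,\dots,V_1^{t_1})$ for $G_1$ and $(w_1,\dots,w_{n_2})$ with classes $(V_2^1,\dots,V_2^{t_2})$ for $G_2$, where $t_i = \thin(G_i)$. I would use the $t_1+t_2$ classes $V_1^1,\dots,V_1^{t_1},V_2^1,\dots,V_2^{t_2}$ (kept disjoint, unlike in the union case) and the concatenated ordering $v_1,\dots,v_{n_1},w_1,\dots,w_{n_2}$. Then I would verify consistency by checking a triple $(r,s,t)$ with $v_r,v_s$ in a common class: since classes never mix vertices of $G_1$ with vertices of $G_2$, the two equal-class vertices lie in the same $G_i$. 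The case analysis splits on where the third vertex sits. If all three are in the same $G_i$, consistency follows from the original representation; if the third vertex is in the other part, every cross edge is present by definition of join, so the required adjacency holds automatically. The same argument, run additionally on the reversed order, handles the proper case, since strong consistency is exactly consistency of both $v_1,\dots,v_n$ and its reverse.

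The equality claim $\thin(G_1 \vee G_2) = \thin(G_1)$ when $G_2$ is complete is where I expect the real work. The inequality $\thin(G_1 \vee G_2) \geq \thin(G_1)$ is immediate since $G_1$ is an induced subgraph. For the reverse, I would start from a consistent representation of $G_1$ with $\thin(G_1)$ classes and insert the vertices of the complete graph $G_2$ into the ordering and partition without increasing the number of classes. The natural move is to place all of $G_2$ at the very end of the ordering (after all of $G_1$) and distribute the $G_2$-vertices among the existing $\thin(G_1)$ classes; because $G_2$ is complete and every $G_2$-vertex is adjacent to every $G_1$-vertex, a $G_2$-vertex placed last is adjacent to all earlier vertices, so any triple ending at it satisfies the consistency implication trivially. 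The main obstacle is controlling triples $(r,s,t)$ where $v_r,v_s$ are two $G_2$-vertices sharing a class and $v_t$ is a later $G_2$-vertex: here I must ensure $v_t v_r \in E$ forces $v_t v_s \in E$, which holds because $G_2$ is complete (all these edges are present), but I would check carefully that no triple with the equal-class pair in $G_2$ and the adjacent vertex back in $G_1$ can arise, which is guaranteed by ordering all of $G_2$ after $G_1$. I would conclude that one class suffices to absorb the entire clique, giving a consistent representation of $G_1 \vee G_2$ with exactly $\thin(G_1)$ classes.
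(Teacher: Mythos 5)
Your proposal is correct and follows essentially the same route as the paper: a concatenated ordering with the two partitions kept disjoint for the additive bound (the proper case handled by the symmetric argument on the reversed order), and, when $G_2$ is complete, placing the clique after all of $G_1$ and absorbing it into one of the existing $\thin(G_1)$ classes, verified by the same triple case analysis. One caveat: your opening suggestion that the bound might flow from $\overline{G_1 \vee G_2} = \overline{G_1} \cup \overline{G_2}$ together with Theorem~\ref{thm:union} would not actually work, since thinness is not complement-invariant (cf.\ Theorem~\ref{thm:tK2}), but your proof never relies on it, so this is only a motivational misstep.
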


\begin{proof}
Let $G_1$ and $G_2$ be two graphs with thinness (resp. proper
thinness) $t_1$ and $t_2$, respectively. Let $v_1,\dots,  v_{n_1}$
and $(V_1^1, \ldots, V_1^{t_1})$ be an ordering and a partition of
$V(G_1)$ which are consistent (resp. strongly consistent). Let
$w_1,\dots, w_{n_2}$ and $(V_2^1, \ldots, V_2^{t_2})$ be an
ordering and a partition of $V(G_2)$ which are consistent (resp.
strongly consistent). For $G = G_1 \vee G_2$, define a partition
with $t_1+t_2$ sets as the union of the two partitions, and
$v_1,\dots, v_{n_1},w_1,\dots, w_{n_2}$ as an ordering of the
vertices.

Let $x,y,z$ be three vertices of $V(G)$ such that $x < y < z$, $xz
\in E(G)$, and $x$ and $y$ are in the same class of the partition
of $V(G)$. Then, in particular, $x$ and $y$ both belong either to
$V(G_1)$ or to $V(G_2)$. If $z$ belongs to the same graph, then
$yz \in E(G)$ because the ordering and partition restricted to
each of $G_1$ and $G_2$ are consistent. Otherwise, $z$ is also
adjacent to $y$ by the definition of join.

We have proved that the defined partition and ordering are
consistent, and thus that $\thin(G_1 \vee G_2) \leq
\thin(G_1)+\thin(G_2)$. The proof of the strong consistency, given
the strong consistency of the partition and ordering of each of
$G_1$ and $G_2$, is symmetric and implies $\pthin(G_1 \vee G_2)
\leq \pthin(G_1)+\pthin(G_2)$.

Suppose now that $G_2$ is complete (in particular, $t_2 = 1$).
Since $G_1$ is an induced subgraph of $G_1 \vee G_2$, then
$\thin(G_1 \vee G_2) \geq \thin(G_1)$. For $G = G_1 \vee G_2$,
define a partition $V^1,\dots, V^{t_1}$ such that $V^1 = V_1^1
\cup V_2^1$ and $V^i = V_1^i$ for $i = 2, \dots, t_1$, and define
$v_1,\dots, v_{n_1},w_1,\dots, w_{n_2}$ as an ordering of the
vertices.

Let $x,y,z$ be three vertices of $V(G)$ such that $x < y < z$, $xz
\in E(G)$, and $x$ and $y$ are in the same class of the partition
of $V(G)$. If $z$ belongs to $V(G_2)$, then $z$ is also adjacent
to $y$, because it is adjacent to every vertex in $G-z$. If $z$
belongs to $V(G_1)$, then $x$, $y$, and $z$, belong to $V(G_1)$
due to the definition of the order of the vertices, and thus $yz
\in E(G)$ because the ordering and partition restricted to $G_1$
are consistent. This proves $\thin(G_1 \vee G_2) \leq \thin(G_1)$,
and therefore $\thin(G_1 \vee G_2) = \thin(G_1)$.
\end{proof}

The following lemma shows a way of obtaining graphs with high
thinness, using the join operator.

\begin{lemma}
If $G$ is not complete, then $\thin(G \vee 2K_1) = \thin(G)+1$.
\end{lemma}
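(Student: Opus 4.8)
The plan is to prove the two inequalities separately. The upper bound is immediate from Theorem~\ref{thm:join}: since $2K_1$ is edgeless it is an interval graph and hence $\thin(2K_1)=1$, so $\thin(G\vee 2K_1)\le \thin(G)+\thin(2K_1)=\thin(G)+1$. The content of the lemma is therefore the matching lower bound $\thin(G\vee 2K_1)\ge \thin(G)+1$, for which I would show that no consistent ordering and partition of $H:=G\vee 2K_1$ uses only $\thin(G)$ classes. Write $a,b$ for the two vertices of the $2K_1$ part; each is adjacent to every vertex of $G$ and they are nonadjacent to each other, and since $G$ is not complete it has a nonadjacent pair $u,w$, so $\{a,b,u,w\}$ induces a $\overline{2K_2}=C_4$ (which already gives $\thin(H)\ge 2$ by Theorem~\ref{thm:tK2}).

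Fix a consistent ordering $<$ of $H$ together with a partition into classes, and assume without loss of generality $a<b$. The first key step is to observe that \emph{$a$ is the first vertex of its own class}: if some $v_r<a$ lay in the class of $a$, then the triple $(v_r,a,b)$ would violate consistency, because $b>a$ is adjacent to $v_r$ (being universal to $G$) but not to $a$. Consequently every vertex of $H$ preceding $a$ lies in $V(G)$ and in a class different from that of $a$. The same triple argument, now with $a$ in the middle position, shows that every vertex $x$ of $G$ that shares the class of $a$ is \emph{forward-universal}, i.e.\ adjacent in $G$ to all vertices of $G$ following it: for such an $x$ and any later $z\in V(G)$ the triple $(a,x,z)$ forces $z$ to be adjacent to $x$.

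Restricting the ordering and partition to $V(G)$ gives a consistent partition of $G$ for $<$ with at most $\thin(G)$ classes. If the class of $a$ contains no vertex of $G$, then $V(G)$ is covered by the remaining classes, yielding a consistent $(\thin(G)-1)$-partition of $G$, a contradiction; this settles the lower bound in that case. It remains to exclude that the class of $a$ contains forward-universal vertices of $G$. For this I would pass to the conflict graph $H_<$ of Theorem~\ref{thm:thin-comp-order}, which is a co-comparability graph, so that $\chi(H_<)=\omega(H_<)$. Since $a$ and $b$ are adjacent to both endpoints of every edge of $G$, neither can serve as the witness $z$ in the definition of $H_<$; hence $H_<[V(G)]=G_<$ and $\omega(G_<)\ge\thin(G)$. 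The goal becomes to exhibit a clique of $H_<$ of size $\omega(G_<)+1$. From the computations above, $a$ is adjacent in $H_<$ to every vertex of $G$ before it (witness $b$) and to every non-forward-universal vertex of $G$ after it, while forward-universal vertices are pairwise nonadjacent in $H_<$; thus a maximum clique $Q$ of $G_<$ meets the forward-universal vertices in at most one vertex, necessarily its minimum $q^\ast$. If $Q$ has no forward-universal vertex, or if $q^\ast<a$, then $a$ is adjacent to all of $Q$ and $Q\cup\{a\}$ is the desired clique.

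The main obstacle is the remaining configuration, in which $a<q^\ast=\min Q$ with $q^\ast$ forward-universal: here a direct check shows that neither $a$, nor $b$, nor $q^\ast$ conflict with one another, so that $q^\ast$ is the unique vertex of $Q$ that $a$ fails to conflict with, and neither of the two obvious one-vertex extensions by $a$ or $b$ enlarges $Q$. To close this case I would exploit $b$ together with the non-completeness of $G$: each vertex of $Q\setminus\{q^\ast\}$ is non-forward-universal and hence has a later non-neighbour in $G$, and these non-neighbours, with $b$ furnishing the witness for the pairs that straddle $a$, should supply the extra conflicting vertex needed to reach size $\omega(G_<)+1$. The delicate point — and where I expect the real work to lie — is showing that this awkward configuration is incompatible with $<$ being an optimal ordering for $G$ (equivalently, that the clique can always be extended), since otherwise one would have an order with $\omega(H_<)=\thin(G)$. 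A convenient preprocessing is to reduce, via Theorem~\ref{thm:join}, to the case where $G$ has no universal vertex: a universal vertex of $G$ is universal in $H$, and deleting it leaves both sides of the claimed identity unchanged and keeps $G$ non-complete. Once the awkward case is excluded, one obtains $\omega(H_<)\ge\omega(G_<)+1\ge\thin(G)+1$ for every ordering $<$, whence $\thin(H)\ge\thin(G)+1$, matching the upper bound and completing the proof.
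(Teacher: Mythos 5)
Your upper bound and the scaffolding of the lower bound are sound and partially parallel the paper: that $a$ must be the first vertex of its class, that every classmate of $a$ in $V(G)$ is forward-universal, and the case where $a$'s class meets no vertex of $G$ (yielding a consistent $(\thin(G)-1)$-partition of $G$, a contradiction) are all correct and are exactly the observations the paper's proof starts from; your structural analysis of the conflict graph $H_<$ of Theorem~\ref{thm:thin-comp-order} (that $H_<[V(G)]=G_<$, the adjacencies of $a$ and $b$ in $H_<$, and that a clique of $G_<$ contains at most one forward-universal vertex, necessarily its minimum) also checks out. But the proof is genuinely incomplete precisely where you flag ``the real work'': the case $a<q^\ast=\min Q$ with $q^\ast$ forward-universal is never closed. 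The assertion that the later non-neighbours of the vertices of $Q\setminus\{q^\ast\}$ ``should supply the extra conflicting vertex'' is only a hope: those non-neighbours are exactly the witnesses already certifying the edges inside $Q$, and nothing shows any of them conflicts with \emph{all} of $Q$, or with $a$ or $b$. Moreover, nothing forces a $(k+1)$-clique of $H_<$ to be a one-vertex extension of the particular maximum clique $Q$ of $G_<$ you fixed, so a strategy confined to extending $Q$ inside the fixed order $<$ has no visible way to terminate; the preprocessing that removes universal vertices via Theorem~\ref{thm:join} is legitimate but does not touch this case.

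The missing idea --- and the one the paper uses --- is to stop searching for a larger clique in the fixed conflict graph and instead \emph{modify the order}. Assuming a consistent $k$-partition of $H=G\vee 2K_1$ with $k=\thin(G)>1$ (the case $k=1$ being settled by the induced $C_4$, as you note), the paper takes the forward-universal classmates of $a$ and defines a new order $<'$ that preserves $<$ on the remaining classes together with $a$ and places those classmates at the very end. Forward-universality, plus the fact that $b$ is adjacent to all of $V(G)$, shows $<'$ is still consistent with the original partition, and --- the key step --- also with the coarser partition that merges $a$'s classmates into $V^1$ and leaves $a$ alone in a singleton class. Restricting this to $V(G)$ produces a consistent $(k-1)$-partition of $G$, contradicting $\thin(G)=k$. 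Translated into your framework: in the awkward configuration one does not extend $Q$ in $H_<$, but exhibits a different order $<'$ whose conflict graph restricted to $V(G)$ is $(k-1)$-colorable, which is impossible. Until you supply an argument of this kind (or an actual clique-exchange argument replacing $q^\ast$), the lower bound, and hence the lemma, is not proved.
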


\begin{proof}
By Theorem~\ref{thm:join}, $\thin(G \vee 2K_1) \leq
\thin(G)+\thin(2K_1) = \thin(G)+1$. On the other hand, as $G \vee
2K_1$ contains $G$ as induced subgraph, $\thin(G \vee 2K_1) \geq
\thin(G)$.

First notice that if $\thin(G) = 1$ but $G$ is not complete, then
$G \vee 2K_1$ contains $C_4$ as induced subgraph, so it is not an
interval graph, and $\thin(G \vee 2K_1) \geq 2$, as claimed.

Suppose then that $\thin(G) = k > 1$ and $\thin(G \vee 2K_1) = k$
as well, and let $<$ be an ordering of the vertices of $G \vee
2K_1$ consistent with a partition $V^1, \dots, V^k$. Let $v, v'$
be the vertices of the graph $2K_1$, and suppose $v < v'$. Without
loss of generality we may assume $v \in V^k$. As $\thin(G)=k$,
$V^k \cap V(G) \neq \emptyset$. Since $v' > v$, $v'$ is
nonadjacent to $v$, and $v'$ is adjacent to all the vertices in
$V^k \cap V(G)$, $v$ has to be the smallest vertex in $V^k$. Let
$z \in V^k \cap V(G)$ and suppose there is a vertex $x > z$ in
$V(G)$. As $x$ is adjacent to $v'$, it is adjacent to $z$ as well.
So, we can define a new order $<'$ on $V(G \vee 2K_1)$ that
preserves the order $<$ in $V^1 \cup V^{k-1} \cup \{v\}$ and such
that the vertices of $V^k - \{v\}$ are the largest. By the
observations above, this order $<'$ is still consistent with the
partition $V^1, \dots, V^k$. But it is also consistent with the
partition ${V^1}', \dots, {V^k}'$ in which ${V^1}' = V^1 \cup V^k
- \{v\}$, ${V^i}' = V^i$ for $1 < i < k$, and ${V^k}' = \{v\}$.
This implies that $\thin(G) < k$, a contradiction that completes
the proof of the theorem.
\end{proof}

Cographs were defined in~\cite{CorneilLerchsStewart81}, where it
was shown that they are exactly the graphs with no induced path of
length four. Cographs admit a full decomposition theorem. Let the
\emph{trivial} graph be the one with one vertex only.

\begin{proposition}{\em \cite{CorneilLerchsStewart81}}\label{prop:decomp-cografos}
Every cograph that is not trivial is either the union or the join
of two smaller cographs.
\end{proposition}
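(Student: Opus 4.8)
The plan is to deduce the decomposition from a single connectivity dichotomy. First I would record two routine facts about the class of cographs. Since the property of having no induced path on four vertices (the $P_4$) is clearly inherited by induced subgraphs, the class of cographs is hereditary. Moreover, as $\overline{P_4}=P_4$, the class is self-complementary: $G$ contains an induced $P_4$ if and only if $\overline{G}$ does, so $G$ is a cograph exactly when $\overline{G}$ is.

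The core of the argument is the dichotomy: \emph{every cograph $G$ on at least two vertices has $G$ disconnected or $\overline{G}$ disconnected.} Granting this, the proposition follows immediately. Let $G$ be a nontrivial cograph. If $G$ is disconnected, partition $V(G)$ into two nonempty sets $V_1, V_2$, each a union of connected components of $G$; then $G=G[V_1]\cup G[V_2]$, and by heredity $G[V_1]$ and $G[V_2]$ are cographs, each with strictly fewer vertices than $G$. If instead $\overline{G}$ is disconnected, partition $V(G)$ into two nonempty sets $V_1, V_2$ with no edge of $\overline{G}$ between them; applying the identity $\overline{H_1\cup H_2}=\overline{H_1}\vee\overline{H_2}$ to $\overline{G}$ gives $G=G[V_1]\vee G[V_2]$, and again $G[V_1],G[V_2]$ are smaller cographs. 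In either case $G$ is the union or the join of two smaller cographs, as required.

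It remains to establish the dichotomy, which I expect to be the main obstacle. Equivalently, by the self-complementary remark, I would prove the contrapositive: if $G$ and $\overline{G}$ are both connected, then $G$ contains an induced $P_4$. A useful first reduction uses distance: if $G$ has two vertices at distance at least three, the first four vertices of a shortest path between them already induce a $P_4$ (a shortest path is induced); applying the same to $\overline{G}$ and using $\overline{P_4}=P_4$, I may assume both $G$ and $\overline{G}$ have diameter at most two. Since $\overline{G}$ is connected, $G$ is not complete, so $G$ has an induced $P_3$, say $a-b-c$ with $a\not\sim c$; and since $\overline{G}$ is connected, $b$ is not universal (a universal $b$ would be isolated in $\overline{G}$), so there is a vertex $d\notin\{a,b,c\}$ with $d\not\sim b$.

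The delicate part is to promote this configuration to an induced $P_4$. The idea is to examine the adjacencies of the remaining vertices to $\{a,b,c\}$: the cograph condition forbids a vertex adjacent to one endpoint of the path but to neither the center nor the other endpoint, since such a vertex would close an induced $P_4$; this sharply limits the admissible attachment patterns. Using diameter two to supply a common neighbor $e$ of $b$ and $d$, and running through the few admissible patterns of $e$ (and, if necessary, of a common neighbor of $d$ with $a$ or $c$), produces an induced $P_4$ in every case except a degenerate one in which an entire set of vertices attaches identically to $a,b,c$; that residual module can then be separated off, contradicting connectivity of $\overline{G}$, or else dispatched by induction on $|V(G)|$. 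This case analysis --- essentially the content of the Corneil--Lerchs--Stewart structure theorem --- is where the real work lies, while the reduction to it sketched above is routine.
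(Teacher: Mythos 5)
The paper gives no proof of this proposition at all: it is quoted as a known result of Corneil, Lerchs and Stewart, so your attempt has to be judged as a self-contained argument. Your reduction is the standard one and is carried out correctly: granting the dichotomy that every cograph on at least two vertices has $G$ or $\overline{G}$ disconnected (Seinsche's theorem), the derivation of the union/join decomposition via heredity and the identity $\overline{H_1\cup H_2}=\overline{H_1}\vee\overline{H_2}$ is complete and correct.

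The genuine gap is that the dichotomy itself --- which is the entire mathematical content of the proposition --- is never proved. Your final paragraph is an honest sketch, but as written it would fail: the local pattern-chase on $\{a,b,c\}$, $d$, and a common neighbor $e$ cannot terminate on a bounded vertex set. Concretely, in your configuration take $d$ adjacent to both $a$ and $c$ and $e$ adjacent to exactly $b$ and $d$; then $\{a,b,c,d,e\}$ induces $K_{2,3}$, which is $P_4$-free, so no induced $P_4$ appears and you must pass to further vertices --- and the same stall can recur. Moreover, the escape clause ``that residual module can then be separated off, contradicting connectivity of $\overline{G}$, or else dispatched by induction on $|V(G)|$'' is not an argument: a set of vertices attaching identically to $a,b,c$ need not be a module of $G$, and nothing is said about how the induction would be set up or why it closes. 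The standard repair is to make the induction the whole proof rather than a last resort: show by induction on $n$ that a connected $P_4$-free graph on at least two vertices is a join. Delete a non-cutvertex $v$; by induction $G-v=A\vee B$, and a short case check on how $v$ attaches shows that either some non-neighbor on each side yields an induced $P_4$ through a neighbor of $v$, or $v$ is complete to $A$ or to $B$, exhibiting $G$ itself as a join. Until something of this kind is written out, the key lemma is asserted, not proven.
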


We will use this structural property along with the theorems about
thinness of union and join of graphs to prove the following.

\begin{theorem} Let $G$ be a cograph and $t \geq 1$. Then $G$ has thinness at most $t$ if and only if $G$ contains no
$\overline{(t+1)K_2}$ as induced subgraph. \end{theorem}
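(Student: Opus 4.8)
The plan is to prove the characterization by induction on the number of vertices of the cograph $G$, using the decomposition theorem (Proposition~\ref{prop:decomp-cografos}) together with the behavior of thinness under union and join (Theorems~\ref{thm:union} and~\ref{thm:join}) and the value $\thin(\overline{tK_2}) = t$ (Theorem~\ref{thm:tK2}). The forward direction is the easy one: if $G$ contains $\overline{(t+1)K_2}$ as an induced subgraph, then since thinness is monotone under taking induced subgraphs, $\thin(G) \geq \thin(\overline{(t+1)K_2}) = t+1$, so $G$ does not have thinness at most $t$. The real content is the converse, which I would prove in its contrapositive form: a cograph $G$ with $\thin(G) \geq t+1$ must contain $\overline{(t+1)K_2}$ as an induced subgraph.

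For the inductive step I would apply Proposition~\ref{prop:decomp-cografos} to write a nontrivial cograph $G$ as either $G_1 \cup G_2$ or $G_1 \vee G_2$ for smaller cographs $G_1, G_2$, and treat the two cases separately. The union case is straightforward: by Theorem~\ref{thm:union}, $\thin(G) = \max\{\thin(G_1),\thin(G_2)\}$, so if $\thin(G) \geq t+1$ then one of the $G_i$ already has thinness at least $t+1$, and by induction it contains $\overline{(t+1)K_2}$, which is then an induced subgraph of $G$. The join case is where the argument must work harder. Here I would want a \emph{lower} bound matching the upper bound of Theorem~\ref{thm:join}, namely that for cographs the join is additive: $\thin(G_1 \vee G_2) = \thin(G_1) + \thin(G_2)$ whenever both $G_i$ are non-complete. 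Granting such an additivity statement, if $\thin(G_1 \vee G_2) \geq t+1$ then setting $t_1 = \thin(G_1)$, $t_2 = \thin(G_2)$ we have $t_1 + t_2 \geq t+1$; by induction each $G_i$ contains $\overline{t_iK_2}$ as an induced subgraph, and the join of $\overline{t_1K_2}$ and $\overline{t_2K_2}$ is exactly $\overline{(t_1+t_2)K_2}$ (since complementing turns join into union of the matchings), so $G_1 \vee G_2$ contains $\overline{(t_1+t_2)K_2} \supseteq \overline{(t+1)K_2}$ as an induced subgraph.

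The main obstacle, and the step I expect to require the most care, is establishing the additivity of thinness under join for cographs, i.e.\ the lower bound $\thin(G_1 \vee G_2) \geq \thin(G_1) + \thin(G_2)$. Theorem~\ref{thm:join} only gives the upper bound, and the preceding lemma ($\thin(G \vee 2K_1) = \thin(G)+1$ for non-complete $G$) shows additivity can fail to be trivial and must be argued via the structure of consistent orderings. I would prove this by contradiction: take an ordering and partition of $G_1 \vee G_2$ into fewer than $t_1 + t_2$ classes that are consistent, and show that the induced restrictions to $V(G_1)$ and $V(G_2)$ cannot both require their full thinness unless some class is ``shared'' in a way forbidden by the complete bipartite join. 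Concretely, because every vertex of $G_1$ is adjacent to every vertex of $G_2$, a class of the partition can essentially serve only one side, so the classes split into those usable for $G_1$ and those usable for $G_2$, forcing at least $\thin(G_1) + \thin(G_2)$ classes in total; the complete cases (where one $G_i$ is a clique and hence has thinness $1$ but absorbs into a single class) must be handled as boundary cases, consistent with the ``$G_2$ complete'' clause of Theorem~\ref{thm:join}. Once this join-additivity lemma is in hand, the induction closes cleanly and the base case $t=1$ (or the trivial one-vertex graph) is immediate.
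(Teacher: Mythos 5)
Your overall architecture is exactly the paper's: induction on the number of vertices via Proposition~\ref{prop:decomp-cografos}, the union case dispatched by Theorem~\ref{thm:union}, and the join case closed by combining the inductively obtained $\overline{t_1K_2}$ and $\overline{t_2K_2}$ into $\overline{(t_1+t_2)K_2}$ and invoking Theorem~\ref{thm:tK2}. However, the step you single out as ``the main obstacle'' --- a standalone lower-bound lemma $\thin(G_1 \vee G_2) \geq \thin(G_1)+\thin(G_2)$ for non-complete $G_1,G_2$, to be proved by a direct argument on consistent orderings --- is both unnecessary and, as sketched, unsound. Unnecessary: in your contrapositive the only use of additivity is to deduce $t_1+t_2 \geq t+1$ from $\thin(G_1\vee G_2)\geq t+1$, and for that the \emph{upper} bound of Theorem~\ref{thm:join} already suffices, since $t_1+t_2 \geq \thin(G_1\vee G_2) \geq t+1$. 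This is precisely how the paper proceeds; the additivity $\thin(G_1\vee G_2)=t_1+t_2$ for non-complete cographs then falls out as a byproduct, because the exhibited induced $\overline{(t_1+t_2)K_2}$, together with Theorem~\ref{thm:tK2} and hereditariness, supplies the matching lower bound after the fact rather than as an input. Unsound: your key intuition that ``a class of the partition can essentially serve only one side'' of the join is false as stated --- in the equality clause of Theorem~\ref{thm:join} a complete $G_2$ is merged into a class of $G_1$'s partition, and even the very special case $\thin(G \vee 2K_1) = \thin(G)+1$ required a careful reordering argument in the paper. Nothing in the paper establishes join-additivity for general non-complete graphs, so had your lemma really been needed, its proof sketch would be a genuine gap; fortunately the gap is removable by the one-line observation above.

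Two smaller points to make the repaired induction airtight. When $t_i = 1$, your appeal ``by induction each $G_i$ contains $\overline{t_iK_2}$'' is outside the inductive statement, which covers only $t \geq 1$, i.e., thinness at least $2$; for $t_i=1$ you must instead use the non-completeness of $G_i$ directly, since $\overline{1\cdot K_2} = 2K_1$ is present exactly when $G_i$ is not complete (the paper's phrase ``by that fact and the inductive hypothesis'' is doing this work). Second, the join case with one side complete should be handled separately via the equality clause $\thin(G_1 \vee G_2) = \thin(G_1)$ of Theorem~\ref{thm:join}, which you gesture at but fold into the additivity lemma; treated directly it reduces immediately to the inductive hypothesis on $G_1$. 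With these adjustments your argument coincides with the paper's proof.
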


\begin{proof} The only if part holds by Theorem~\ref{thm:tK2}, because the class of $k$-thin graph is
hereditary for every $k$.

We will prove the if part by induction on the number of vertices
of the cograph $G$. If $G$ is a trivial graph, then $\thin(G)=1$
and the theorem holds. If $G$ is not trivial, by Proposition
\ref{prop:decomp-cografos}, it is either union or join of two
smaller cographs $G_1$ and $G_2$, with thinness $t_1$ and $t_2$,
respectively.

Suppose first $G = G_1 \cup G_2$. By Theorem \ref{thm:union},
$\thin(G) = \max\{t_1,t_2\}$. If $t_1$ (resp. $t_2$) is greater
than one, then by inductive hypothesis $G_1$ (resp. $G_2$)
contains $\overline{t_1K_2}$ (resp. $\overline{t_2K_2}$) as
induced subgraph, and so does $G$.

Suppose now that $G = G_1 \vee G_2$. If one of them is complete
(suppose without loss of generality $G_2$), then, by Theorem
\ref{thm:join}, $\thin(G) = t_1$. If $t_1$ is greater than one,
then by inductive hypothesis $G_1$ contains $\overline{t_1K_2}$ as
induced subgraph, and so does $G$. If none of them is complete,
then, by that fact and the inductive hypothesis, $G_1$ contains
$\overline{t_1K_2}$ and $G_2$ contains $\overline{t_2K_2}$ as
induced subgraph. As $\overline{t_1K_2} \vee \overline{t_2K_2} =
\overline{(t_1+t_2)K_2}$, $G$ contains $\overline{(t_1+t_2)K_2}$
as induced subgraph, thus $\thin(G) \geq t1+t2$
(Theorem~\ref{thm:tK2}). By Theorem \ref{thm:join}, $\thin(G) \leq
t1+t2$, and therefore  $\thin(G) = t1+t2$. This finishes the proof
of the theorem. \end{proof}

A characterization by minimal forbidden induced subgraphs for
$k$-thin graphs, $k \geq 2$, is open. \\

Let $G_1 = (V_1,E_1)$ and $G_2 = (V_2,E_2)$ be two graphs. The
Cartesian product $G_1 \square G_2$ is a graph whose vertex set is
the Cartesian product $V_1 \times V_2$, and such that two vertices
$(u_1,u_2)$ and $(v_1,v_2)$ are adjacent in $G_1 \square G_2$ if
and only if either $u_1 = v_1$ and $u_2$ is adjacent to $v_2$ in
$G_2$, or $u_2 = v_2$ and $u_1$ is adjacent to $v_1$ in $G_1$.

\begin{theorem}
Let $G_1$ and $G_2$ be graphs. Then {$\thin(G_1 \square G_2) \leq
\thin(G_1)|V(G_2)|$} and {$\pthin(G_1 \square G_2) \leq
\pthin(G_1)|V(G_2)|$}.
\end{theorem}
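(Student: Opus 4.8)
The plan is to prove the bound $\thin(G_1 \square G_2) \leq \thin(G_1)\,|V(G_2)|$ by exhibiting an explicit ordering of $V(G_1 \square G_2)$ together with a partition into $\thin(G_1)\,|V(G_2)|$ classes, and then checking it is consistent (and, for the proper version, strongly consistent). Let $\thin(G_1)=t_1$, let $u_1 < \dots < u_{n_1}$ and $(V_1^1,\dots,V_1^{t_1})$ be a consistent ordering and partition of $V(G_1)$, and fix an arbitrary labeling $w_1,\dots,w_{n_2}$ of $V(G_2)$. The natural idea is to keep the $G_1$-coordinate as the primary structure and use the $G_2$-coordinate both to break ties in the order and to refine the partition. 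Concretely, I would order the vertices $(u_a, w_b)$ lexicographically by the $G_1$-coordinate first (i.e. $(u_a,w_b) < (u_{a'},w_{b'})$ iff $u_a < u_{a'}$, or $a=a'$ and $b<b'$), and define the classes by the rule that $(u_a,w_b)$ lands in a class indexed by the pair $(i, b)$, where $i$ is the class of $u_a$ in the $G_1$-partition. This gives $t_1 \cdot n_2 = \thin(G_1)\,|V(G_2)|$ classes, as required.

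With this setup, the key verification is the consistency condition. Take a triple $(u_a,w_b) < (u_{a'},w_{b'}) < (u_{a''},w_{b''})$ with the first two in the same class and the first adjacent to the third. Same class means $u_a, u_{a'}$ are in the same $G_1$-class \emph{and} $b=b'$. There are two cases for the adjacency $(u_a,w_b)(u_{a''},w_{b''}) \in E(G_1\square G_2)$. If the edge comes from the $G_2$-factor, then $u_a = u_{a''}$ and $w_b$ is adjacent to $w_{b''}$ in $G_2$; but $u_a = u_{a''}$ forces $a = a''$, and since $a < a' \le a''$ we'd need $a < a' \le a$, so this case cannot place the three vertices in the required order unless $a = a' = a''$, at which point the primary coordinate is constant and the ordering reduces to the $G_2$-coordinate — here the middle and third vertices share the same $G_2$-adjacency target because $b=b'$, giving the desired edge. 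If instead the edge comes from the $G_1$-factor, then $w_b = w_{b''}$ and $u_a$ is adjacent to $u_{a''}$ in $G_1$; since $u_a, u_{a'}$ are in the same $G_1$-class and $u_a < u_{a'} < u_{a''}$ with $u_a u_{a''}\in E(G_1)$, consistency of the $G_1$-ordering gives $u_{a'} u_{a''}\in E(G_1)$, and because $b = b' = b''$ the corresponding product edge $(u_{a'},w_{b'})(u_{a''},w_{b''})$ is present. So consistency holds in every case.

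For the proper-thinness bound, I would run the symmetric argument: start from a \emph{strongly} consistent ordering and partition of $G_1$ (so that both $u_1<\dots<u_{n_1}$ and its reverse are consistent), use the same lexicographic-refinement construction, and verify the second (reversed) consistency condition — namely that if the last two vertices of a triple share a class and the first is adjacent to the third, then the first is adjacent to the second. This is a mirror image of the computation above, invoking strong consistency of the $G_1$-ordering in the $G_1$-factor case, and it yields $\pthin(G_1\square G_2)\le \pthin(G_1)\,|V(G_2)|$.

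The step I expect to require the most care is the case analysis on which factor supplies each edge, and in particular making sure the \emph{ordering} is chosen so that the two consistency cases genuinely decouple. The subtlety is that two vertices in the same class must agree in the $G_2$-coordinate (that is why the partition is refined by all of $V(G_2)$ rather than, say, just the join-style union), and this agreement is exactly what guarantees that a product edge witnessed at the first vertex transfers to the second. The asymmetric roles of $G_1$ and $G_2$ — $G_1$ controls the ordering and the ``thin'' structure while $G_2$ is absorbed by brute refinement — is what makes the bound the product $\thin(G_1)\,|V(G_2)|$ rather than anything symmetric, and the proof should make clear that the $G_2$-factor edges never create a violating triple precisely because they force the $G_1$-coordinate to be constant.
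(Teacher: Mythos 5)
Your proposal is correct and follows essentially the same route as the paper: the same lexicographic ordering with the $G_1$-coordinate primary, the same partition into classes indexed by pairs (class of the $G_1$-coordinate, $G_2$-vertex), and the same case analysis in which a $G_2$-factor edge forces the $G_1$-coordinates to coincide (making that case harmless) while a $G_1$-factor edge forces $b=b'=b''$ and reduces to (strong) consistency in $G_1$. The only cosmetic difference is that the paper organizes the cases by the pattern of equalities among the $G_1$-indices $p,q,r$ rather than by which factor supplies the edge, but the underlying verification is identical.
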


\begin{proof}
Let $G_1=(V_1, E_1)$ be a $k$-thin (resp. proper $k$-thin) graph,
and let $v_1,\dots, v_{n_1}$ and $(V_1^1, \ldots, V_1^{k})$ be an
ordering and a partition of $V_1$ which are consistent (resp.
strongly consistent). Let $G_2=(V_2,E_2)$, $n_2=|V_2|$, and
$w_1,\dots, w_{n_2}$ an arbitrary ordering of $V_2$. Consider $V_1
\times V_2$ lexicographically ordered with respect to the
orderings of $V_1$ and $V_2$ above. Consider now the partition
$\{V^{i,j}\}_{1\leq i \leq k,\ 1 \leq j \leq n_2}$ such that
$V^{i,j} = \{(v,w_j) : v \in V_1^i\}$ for each $1\leq i \leq k$,
$1 \leq j \leq n_2$. We will show that this ordering and partition
of $V_1 \times V_2$ are consistent (resp. strongly consistent).
Let $(v_p,w_i), (v_q,w_j), (v_r,w_{\ell})$ be three vertices
appearing in that ordering in $V_1 \times V_2$.

\emph{Case 1: $p = q = r$.} In this case, the three vertices are
in different classes, so no restriction has to be satisfied.

\emph{Case 2: $p = q < r$.} In this case, $(v_p,w_i)$ and
$(v_q,w_j)$ are in different classes. So suppose $G_1$ is proper
$k$-thin and $(v_q,w_j), (v_r,w_{\ell})$ belong to the same class,
i.e., $j=\ell$. Vertices $(v_p,w_i)$ and $(v_r,w_{\ell})$ are
adjacent in $G_1 \square G_2$ if and only if $i = \ell$ and
$v_pv_r \in E_1$. But then $(v_p,w_i)=(v_q,w_j)$, a contradiction.

\emph{Case 3: $p < q = r$.} In this case, $(v_q,w_j)$ and
$(v_r,w_{\ell})$ are in different classes. So suppose $G_1$ is
$k$-thin and $(v_p,w_i), (v_q,w_j)$ belong to the same class,
i.e., $i=j$. Vertices $(v_p,w_i)$ and $(v_r,w_{\ell})$ are
adjacent in $G_1 \square G_2$ if and only if $i = \ell$ and
$v_pv_r \in E_1$. But then $(v_r,w_{\ell})=(v_q,w_j)$, a
contradiction.

\emph{Case 4: $p < q < r$.} Suppose first $G_1$ is $k$-thin (resp.
proper $k$-thin) and $(v_p,w_i), (v_q,w_j)$ belong to the same
class, i.e., $i=j$ and $v_p$, $v_q$ belong to the same class in
$G_1$. Vertices $(v_p,w_i)$ and $(v_r,w_{\ell})$ are adjacent in
$G_1 \square G_2$ if and only if $i = \ell$ and $v_pv_r \in E_1$.
But then $j=\ell$ and since the ordering and the partition are
consistent (resp. strongly consistent) in $G_1$, $v_rv_q \in E_1$
and so $(v_r,w_{\ell})$ and $(v_q,w_j)$ are adjacent in $G_1
\square G_2$. Now suppose that $G_1$ is proper $k$-thin and
$(v_q,w_j), (v_r,w_{\ell})$ belong to the same class, i.e.,
$j=\ell$. Vertices $(v_p,w_i)$ and $(v_r,w_{\ell})$ are adjacent
in $G_1 \square G_2$ if and only if $i = \ell$ and $v_pv_r \in
E_1$. But then $i=j$ and since the ordering and the partition are
strongly consistent in $G_1$, $v_pv_q \in E_1$ and so $(v_p,w_i)$
and $(v_q,w_j)$ are adjacent in $G_1 \square G_2$.
\end{proof}

\begin{corollary}\label{cor:p2thin}
If $G$ is (proper) $k$-thin then $G \square K_t$ is (proper)
$kt$-thin. In particular, if $G$ is a (proper) interval graph then
$G \square K_t$ is (proper) $t$-thin.
\end{corollary}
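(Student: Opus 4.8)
The plan is to deduce this corollary directly from the preceding theorem on Cartesian products, specialized to the case $G_2 = K_t$. Since the complete graph $K_t$ on $t$ vertices has $|V(K_t)| = t$, the inequality $\thin(G_1 \square G_2) \leq \thin(G_1)\,|V(G_2)|$ immediately yields $\thin(G \square K_t) \leq \thin(G)\cdot t = kt$ when $G$ is $k$-thin, and the analogous bound $\pthin(G \square K_t) \leq \pthin(G)\cdot t = kt$ holds in the proper case by the second inequality of the theorem. This establishes the first assertion of the corollary without any further work.

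For the second assertion, I would recall that the interval graphs are exactly the graphs of thinness $1$ and, as stated in Section~\ref{sec:thin}, proper interval graphs are exactly the proper $1$-thin graphs. Thus, if $G$ is an interval graph, then $\thin(G)=1$, and the first assertion with $k=1$ gives $\thin(G \square K_t) \leq t$, so $G \square K_t$ is $t$-thin. Likewise, if $G$ is a proper interval graph, then $\pthin(G)=1$, and the proper version gives $\pthin(G \square K_t)\leq t$, so $G \square K_t$ is proper $t$-thin.

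The proof is therefore almost entirely a matter of bookkeeping: the content is carried by the Cartesian product theorem, and the corollary only instantiates $G_2 = K_t$ and invokes the characterizations of (proper) interval graphs as the (proper) $1$-thin graphs. There is essentially no obstacle here; the only point deserving a word is noting that $|V(K_t)| = t$ so that the product $\thin(G_1)|V(G_2)|$ becomes $kt$, and that the specialization to $k=1$ recovers the interval-graph statement. I would write the proof in a single short paragraph that chains these observations together.

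\begin{proof}
Taking $G_1 = G$ and $G_2 = K_t$ in the previous theorem, and using $|V(K_t)| = t$, we obtain $\thin(G \square K_t) \leq \thin(G)\cdot t$ and $\pthin(G \square K_t) \leq \pthin(G)\cdot t$. Hence if $G$ is $k$-thin (resp. proper $k$-thin), then $G \square K_t$ is $kt$-thin (resp. proper $kt$-thin). In particular, if $G$ is an interval graph, then $\thin(G)=1$, so $G \square K_t$ is $t$-thin; and if $G$ is a proper interval graph, then $\pthin(G)=1$, so $G \square K_t$ is proper $t$-thin.
\end{proof}
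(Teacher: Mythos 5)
Your proof is correct and matches the paper's intent exactly: the corollary is stated there without a separate proof precisely because it follows by instantiating $G_2 = K_t$ in the Cartesian product theorem and invoking the characterization of (proper) interval graphs as the (proper) $1$-thin graphs, which is what you do.
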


For a graph $G(V,E)$ and an integer $t$, we say that $f$ is a
\emph{$t$-rainbow dominating function} if it assigns to each vertex
$v \in V$ a subset of $\{1,\dots,t\}$ such that $\cup_{u \in N(v)}
f(u)=\{1,\dots,t\}$ for all $v$ with $f(v)=\emptyset$. Consider
the following generalization of the dominating set problem.

\noindent \textsc{$t$-rainbow domination problem}\\
\emph{Instance:} A graph $G=(V,E)$.\\
\emph{Find:} a $t$-rainbow dominating function that minimizes $\sum_{v \in V}|f(v)|$.\\

The $t$-rainbow domination problem is equivalent to minimum
dominating set of {$G
    \square K_t$}~\cite{B-KS-2R-dom}. As a consequence of Corollary~\ref{cor:p2thin}
and the last remark in Section~\ref{sec:width}, it can be solved
in polynomial time on graphs with bounded thinness for fixed
values of $t$. This generalizes the polynomiality for interval
graphs, recently proved by Hon, Kloks, Liu, and Wang
in~\cite{H-K-L-W-r-dom-arx} (the algorithm for $t=2$ is claimed
in~\cite{H-K-L-W-r-dom}). The problem for proper interval graphs
was stated as an open question by Bre{\v{s}}ar and Kraner
{\v{S}}umenjak in~\cite{B-KS-2R-dom}.

\section{Conclusions and open problems}\label{sec:conc}

We described a wide family of combinatorial optimization problems
that can be solved in polynomial time on classes of bounded
thinness and bounded proper thinness. We think that some
restrictions can be further generalized (specially the domination
type ones), with more involved sets of parameters and transition
rules. We tried to keep it as simpler as possible, yet including
many of the classical combinatorial optimization problems in the
literature.

We also proved a number of theoretical results, some of them
related to the recognition problem for the classes, others
relating the concept of thinness and proper thinness to other
known graph parameters, and analyzing their behavior under the
graph operations union, join, and Cartesian product.

Some open problems are the following.

\begin{itemize}

\item Characterize (proper) $k$-thin graphs by {minimal forbidden
induced subgraphs} (or at least within some graph class, we did it
for thinness in cographs).

\item Find sufficient conditions, for instance a family subgraphs
to forbid as induced subgraphs, for a graph to be (proper)
$k$-thin, even if these graphs are not necessarily forbidden
induced subgraphs for (proper) $k$-thin graphs. These kind of
results have been obtained for MIM-width in~\cite{KANG20171}.

\item Study the behavior of thinness under other graph products or
graph operators in general.

\item What is the {complexity} of computing the {thinness/proper
thinness} of a graph? Or deciding if it is at most $k$ for some
fixed values $k$?

\item Can we develop some randomized algorithm to test just a
subset of vertex orderings and obtain with high probability an
approximation of the thinness/proper thinness?

\item Can we improve the complexity of the algorithms that are in XP to FPT? Or
prove a hardness result?

\item Given a partition of the vertex set into a \emph{fixed}
number $k$ of classes, what is the complexity of deciding if there
is a (strongly) consistent order for the vertices w.r.t. that
partition (and finding it)? (We have proved that for an arbitrary
number of classes the problems are NP-complete, and we have solved
in polynomial time the symmetric problem, i.e., given the
ordering, find a minimum (strongly) consistent partition.)

\end{itemize}

\bigskip

\noindent \textbf{Acknowledgements:} This work was partially
supported by UBACyT Grant 20020130100808BA, CONICET PIP
112-201201-00450CO, ANPCyT PICT 2012-1324 and 2015-2218
(Argentina). We want to thank \-Gianpaolo Oriolo for
bringing our attention into the thinness parameter and for helpful
discussions about it, Daniela Saban for the example of an interval
graph with high proper thinness, Nicolas Nisse for pointing out
the relation between proper thinness and other interval graphs
invariants to us, Bo\v{s}tjan Bre\v{s}ar and Martin Milani\v{c}
for pointing out the open problem on rainbow domination to us, and
Yuri Faenza for his valuable help in the improvement of the
manuscript. Last but not least, we want to thank the anonymous referees for their useful comments that help us to improve the paper.

%\bibliographystyle{plain}
%\bibliography{bnm-j,bnm}

%\end{document}

\end{document}